\newtheorem{lemma}{\bf Lemma}
\newtheorem{theorem}{\bf Theorem}
\newtheorem{proposition}{\bf Proposition}
\newtheorem{definition}{\bf Definition}
\newenvironment{proof}[1][Proof]{\begin{trivlist}
\item[\hskip \labelsep {\bfseries #1}]}{\end{trivlist}}
\newcommand*{\rom}[1]{\expandafter\@slowromancap\romannumeral #1@}
\begin{document}
\title{\textbf{Distributed Detection in Tree Networks: Byzantines and Mitigation Techniques}}
\author{Bhavya~Kailkhura,~\IEEEmembership{Student Member,~IEEE}, Swastik~Brahma,~\IEEEmembership{Member,~IEEE}, Berkan Dulek,~\IEEEmembership{Member,~IEEE}, Yunghsiang~S Han,~\IEEEmembership{Fellow,~IEEE},
Pramod~K.~Varshney,~\IEEEmembership{Fellow,~IEEE}
\thanks{This work was supported by the Center for Advanced Systems and Engineering at Syracuse University.}
\thanks{The authors would like to thank Aditya Vempaty for his valuable comments and suggestions to improve the
quality of the paper.}
\thanks{B. Kailkhura, S. Brahma and P. K. Varshney are with Department of EECS, Syracuse University, Syracuse, NY 13244. (email: bkailkhu@syr.edu; skbrahma@syr.edu; varshney@syr.edu)}
\thanks{B. Dulek is with Department of Electrical and Electronics Engineering, Hacettepe University, Beytepe Campus, 06800 Ankara, Turkey. (email: berkan@ee.hacettepe.edu.tr)}
\thanks{Y. S. Han is with EE Department, National Taiwan University of Science and Technology, Taiwan, R. O. C. (email: yshan@mail.ntust.edu.tw)}}
\maketitle
\begin{abstract} 
In this paper, the problem of distributed
detection in tree networks in the presence of Byzantines is considered. 
Closed form expressions for optimal attacking strategies that minimize the miss detection error exponent at the fusion center (FC) are obtained. We also look at the problem from the network designer's (FC's) perspective. We study the problem of designing optimal distributed detection parameters in a tree network in the presence of Byzantines. 
Next, we model the strategic interaction between the FC and the attacker as a Leader-Follower (Stackelberg) game. This formulation provides a methodology for predicting attacker and defender (FC) equilibrium strategies, which can be used to implement the optimal detector.
Finally, a reputation based scheme to identify Byzantines is proposed and its performance is analytically evaluated. We also provide some numerical examples to gain insights into the solution.
\end{abstract}
\begin{keywords}
Distributed detection, data falsification, Byzantines, tree networks, error exponent, leader-follower game, reputation based mitigation scheme
\end{keywords}

\section{Introduction}
Distributed detection deals with the problem of making a global decision regarding a phenomenon based on local decisions collected from several remotely located sensing nodes. Distributed detection research has traditionally focused on the parallel network topology, in which nodes directly transmit their observations or decisions to the Fusion Center (FC)~\cite{Varshney}\cite{Viswanathan}\cite{veer}. Despite its theoretical importance and analytical tractability, parallel topology  may not always reflect the
practical scenario. In certain cases, it may be required to place the nodes outside their communication range with the FC. Then, the coverage area can be increased by forming a multi-hop network,
where nodes are organized hierarchically into multiple levels
(tree networks). Some examples of tree networks include wireless sensor and military communication networks. For instance, the IEEE 802.15.4 (Zigbee) specifications \cite{Aliance} and IEEE 802.22b \cite{802_22bDraft} support tree networks.

Typically, a network embodies a large number of inexpensive sensors, which are deployed in an open environment to collect the observations regarding a certain phenomenon and, therefore, are susceptible to many kinds of attacks. A typical example is a Byzantine attack. While Byzantine attacks (originally proposed in \cite{Lamport}) may, in
general, refer to many types of malicious behavior, our focus in this paper is on data-falsification
attacks~\cite{vempaty_spm13,frag, Rifa, Marano, Rawat, Kailkhura2013, Kailkhura, aditya,a2,bhavyaj}, where an attacker sends false (erroneous) data to the FC to degrade detection performance. In this paper, we refer to such data falsification attackers as \textit{Byzantines}, and the data thus fabricated as \textit{Byzantine data}.

\subsection{Related Work}
Recently, distributed detection in the presence of Byzantine attacks has been explored in~\cite{Marano, Rawat}, where the problem of determining the most effective attacking strategy for the Byzantines was investigated. However, both works focused only on parallel topology. 
The problem considered in this paper is
most related to our earlier papers~\cite{bhavyaj, Kailkhura2013}. In~\cite{bhavyaj, Kailkhura2013}, we studied the problem of distributed detection in perfect tree networks (all intermediate nodes in the tree have the same number of children) with Byzantines under the assumption that the FC does not know which decision bit is sent from which node and assumes each received bit to originate from nodes at depth $k$ with a certain probability. Under this assumption, the attacker's aim was to maximize 
the false alarm probability for a fixed detection probability. When the number of nodes is large, by Stein's lemma~\cite{cover}, we know that the error exponent of the false alarm probability can be used as a surrogate for the false alarm probability. Thus, the optimal attacking strategy was obtained by making the error exponent of the false alarm probability at the FC equal to zero, which makes the decision fusion scheme to become completely incapable (blind). Some counter-measures were also proposed to protect
the network from such Byzantines.

There are several notable differences between this paper and our earlier papers~\cite{bhavyaj,Kailkhura2013}. 
First, in contrast to~\cite{bhavyaj,Kailkhura2013}, in this paper, the problem of distributed detection in regular tree networks\footnote{For a regular tree, intermediate nodes at different levels are allowed to have different degrees, i.e., number of children.
} with Byzantines is addressed in a practical setup where the FC has the knowledge of which bit is transmitted from which node. Note that, in practice, the FC knows which bit is transmitted from which node, e.g., using MAC schemes\footnote{In practice, one possible way to achieve this is by using the buffer-less TDMA MAC protocol, in which, distinct non-overlapping time slots are assigned (scheduled) to the nodes for communication. One practical example of such a scheme is given in~\cite{treeMAC}.}, and can utilize this information to improve system performance. 
Next, for the analysis of the optimal attack, we consider nodes residing at different levels of the tree to have different detection performance. We also allow Byzantines residing at different levels of the tree to have different attacking strategies and, therefore, provide a more general and comprehensive analysis of the problem as compared to~\cite{bhavyaj,Kailkhura2013}. We also study the problem from the network designer's perspective. 
Based on the information regarding which bit is transmitted from which node, we propose schemes to mitigate the effect of the Byzantines. 

\subsection{Main Contributions}
In this paper, it is assumed that the FC knows which bit is transmitted from which node.
Under this assumption, the problem of distributed detection in tree networks in the presence of Byzantines is considered.  
The main contributions of this paper are summarized below:
\begin{itemize}
\item Detection performance in tree networks with Byzantines is characterized in terms of the error exponent and a closed form expression for the optimal error exponent is derived.
\item The minimum attacking power required by the Byzantines to blind the FC in a tree network is obtained. It is shown that when more than a certain fraction of individual node decisions are falsified, the decision fusion scheme is completely jeopardized. 
\item The problem is also investigated from the network designer's perspective by focusing on the design of optimal distributed detection parameters in a tree network. 
\item We model the strategic interaction between the FC and the attacker as a Leader-Follower (Stackelberg) game and identify attacker and defender (FC) equilibrium strategies.
The knowledge of these equilibrium strategies can later be used to implement the optimal detector at the FC.
\item We propose a simple yet efficient reputation based scheme, which works even if the FC is blinded, to identify Byzantines in tree networks and analytically evaluate its performance.
\end{itemize}

The rest of the paper is organized as follows.
Section~\ref{sec2} introduces the system model.
In Section~\ref{sec3}, we study the problem from Byzantine's perspective and provide closed form expressions for optimal attacking strategies.
In Section~\ref{sec4}, we investigate the problem of designing optimal distributed detection parameters in the presence of Byzantines. 
In Section~\ref{sec5}, we model the strategic interaction between the FC and the attacker as a Leader-Follower
(Stackelberg) game and find equilibrium strategies.
In Section~\ref{sec6}, we introduce an efficient Byzantine identification scheme and analyze its performance.
Finally, Section~\ref{sec6} concludes the paper.

\section{System Model}
\label{sec2}
\begin{figure}[t]
  \centering
    \includegraphics[height=2.5in, width=!]{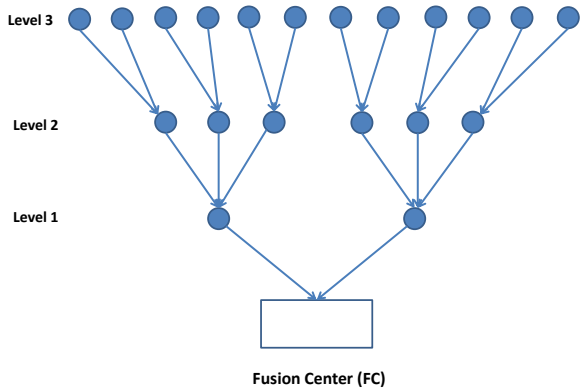}
    \vspace*{-0.3in}
    \caption{A distributed detection system organized as a regular tree $(a_1=2,\;a_2=3,\;a_3=2)$ is shown as an example.}\label{syst}
\vspace*{-0.1in}
\end{figure}
We consider a distributed detection system organized as a regular tree network rooted at the FC (See Figure~\ref{syst}). 
For a regular tree, all the leaf nodes are at the same level (or depth) and all the intermediate nodes at level $k$ have degree $a_k$. The regular tree is assumed to have a set $\mathcal{N}=\{\mathbb{N}_{k}\}_{k=1}^{K}$ of transceiver nodes, where $|\mathbb{N}_{k}|=N_{k}$ is the total number of nodes at level $k$.
We assume that the depth of the tree is $K>1$ and $a_k\geq 2$. The total number of nodes in the network is denoted as $N=\sum_{k=1}^K N_{k}$ and $\mathcal{B}=\{\mathbb{B}_{k}\}_{k=1}^{K}$ denotes the set of Byzantine nodes with $|\mathbb{B}_{k}|=B_{k}$, where $\mathbb{B}_{k}$ is the set of Byzantines at level $k$. The set containing the number of Byzantines residing at each level $k$, $1\leq k\leq K$, is referred to as an attack configuration, i.e., $\{B_k\}_{k=1}^{K}=\{|\mathbb{B}_{k}|\}_{k=1}^{K}$. 
Next, we define the \textit{modus operandi} of the nodes.

\subsection{Modus Operandi of the Nodes}
\label{Network}

We consider a binary hypothesis testing problem with two hypotheses $H_{0}$ (signal is absent)
and $H_{1}$ (signal is present). Under each hypothesis, it is assumed that the observations $Y_{k,i}$ at each node $i$ at level $k$ are conditionally independent. 
Each node $i$ at level $k$ acts as a source in the sense that it makes a one-bit (binary) local decision $v_{k,i} \in \{ 0,1\}$ regarding the absence or presence of the signal
using the likelihood ratio test (LRT) \footnote{Notice that, under the conditional independence assumption, the optimal
decision rule at the local sensor is a likelihood-ratio test~\cite{optimallrt}.}
    \begin{equation}
        \label{eqn1}
        \dfrac{p_{Y_{k,i}}^{(1)}(y_{k,i})}{p_{Y_{k,i}}^{(0)}(y_{k,i})} \quad \mathop{\stackrel{v_{k,i} = 1}{\gtrless}}_{v_{k,i} = 0} \quad \lambda_k,
    \end{equation}
where $\lambda_k$ is the threshold used at level $k$ (it is assumed that all the nodes at level $k$ use the same threshold $\lambda_k$) and $p_{Y_{k,i}}^{(j)} (y_{k,i})$ is the conditional probability density function (PDF) of observation $y_{k,i}$ under hypothesis $H_j$ for $j \in \{ 0,1\}$. We denote the probabilities of detection and false alarm of a node at level $k$ by $P_{d}^k=P(v_{k,i}=1|H_{1})$ and $P_{fa}^k=P(v_{k,i}=1|H_{0})$, respectively, which are functions of $\lambda_k$ and hold for both Byzantines and honest nodes.
After making its one-bit local decision $v_{k,i} \in \{ 0,1\}$, node $i$ at level $k$ sends $u_{k,i}$ to its parent node at level $k-1$, where $u_{k,i} = v_{k,i}$ if $i$ is an honest node, but for a Byzantine node $i$,
$u_{k,i}$ need not be equal to $v_{k,i}$. 
Node $i$ at level $k$ also receives the decisions $u_{k',j}$ of all successors $j$ at levels $k' \in [k+1,K]$, which are forwarded to node $i$ by its immediate children, and
forwards\footnote{For example, IEEE 802.16j mandates tree forwarding and IEEE 802.11s standardizes a tree-based routing protocol.} them to its parent node at level $k-1$. We assume error-free communication between children and the parent nodes.
Next, we present a mathematical model for the Byzantine attack.

\subsection{Byzantine Attack Model}
We define the following strategies $P_{j,1}^{H}(k)$, $P_{j,0}^H(k)$ and $P_{j,1}^B(k)$, $P_{j,0}^B(k)$ ($j \in \{0,1\}$ and $k=1,\cdots,K$) for the honest and Byzantine nodes at level $k$, respectively:\\
Honest nodes:
\begin{equation}
P_{1,1}^H(k)=1-P_{0,1}^H(k)=P_k^{H}(x=1|y=1)=1
\end{equation}
\begin{equation}
P_{1,0}^H(k)=1-P_{0,0}^H(k)=P_k^{H}(x=1|y=0)=0
\end{equation}
				
\noindent
Byzantine nodes:
\begin{equation}
P_{1,1}^B(k)=1-P_{0,1}^B(k)=P_k^{B}(x=1|y=1) 
\end{equation}
\begin{equation}
P_{1,0}^B(k)=1-P_{0,0}^B(k)=P_k^{B}(x=1|y=0) 
\end{equation}
where $P_k(x=a|y=b)$ is the conditional probability that a node at level $k$ sends $a$ to its parent when it receives $b$ from its child or its actual decision is $b$. For  notational convenience, we use $(P_{1,0}^k,P_{0,1}^k)$ to denote the flipping probability of the Byzantine node at level $k$.
Furthermore, we assume that if a node (at any level) is a Byzantine, then none of its ancestors and successors are Byzantine (non-overlapping attack configuration); otherwise, the effect of a Byzantine due to other Byzantines on the same path may be nullified  (e.g., Byzantine ancestor re-flipping the already flipped decisions of its successors). This means that every path from a leaf node to the FC will have at most one Byzantine. 
Notice that, for the attack configuration $\{B_k\}_{k=1}^{K}$, the total number of corrupted paths (i.e., paths containing a Byzantine node) from level $k$ to the FC are $\sum_{i=1}^{k}B_i\frac{N_k}{N_i}$, where $B_i\frac{N_k}{N_i}$ is the total number of nodes covered\footnote{Node $i$ at level $k'$ covers all its children at levels $k'+1$ to $K$ and itself.} at level $k$ by the presence of $B_i$ Byzantines at level 
$i$. If we denote $\alpha_k=\frac{B_k}{N_k}$, then, $\frac{\sum_{i=1}^{k}B_i\frac{N_k}{N_i}}{N_k}=\sum_{i=1}^{k}\alpha_i$ is the fraction of decisions coming from level $k$ that encounter a Byzantine along the way to the FC. We also approximate the probability that the FC receives the flipped decision $\bar{x}$ of a given node at level $k$ when its actual decision is $x$ as
$\beta_{\bar{x},x}^k=\sum_{j=1}^{k}\alpha_j P_{\bar{x},x}^{j},\;x\in\{0,1\}$.

\subsection{Binary Hypothesis Testing at the Fusion Center}
We consider the distributed detection problem under the Neyman-Pearson (NP) criterion. The FC receives decision vectors, $[\mathbf{z_1},\cdots,\mathbf{z_K}]$, where $\mathbf{z_k}$ for $k\in\{1,\cdots,K\}$ is a decision vector with its elements being $z_1,\cdots,z_{N_k}$, from the nodes at different levels of the tree. Then the FC makes the global decision about the phenomenon by employing the LRT. Due to system vulnerabilities, some of the nodes may be captured by the attacker and reprogrammed to transmit false information to the FC to degrade detection performance. 
We assume that the only information available at the FC is the probability $\beta_{\bar{x},x}^k$, which is the probability with which the data coming from level $k$ has been falsified.
Using this information, the FC calculates the probabilities $\pi_{j,0}^k=P(z_{i}=j|H_{0},k)$ and $\pi_{j,1}^k=P(z_{i}=j|H_{1},k)$, which are the distributions of received decisions $z_{i}$ originating from level $k$ and arriving to the FC under hypotheses $H_0$ and $H_1$. The FC makes its decision regarding the absence or presence of the signal using the following likelihood
ratio test 
\begin{equation}
\label{LRT}
\prod\limits_{k=1}^{K} \left(\frac{\pi_{1,1}^k}{\pi_{1,0}^k}\right)^{s_k} \left(\frac{1-\pi_{1,1}^k}{1-\pi_{1,0}^k}\right)^{N_k-s_k} \quad \mathop{\stackrel{H_1}{\gtrless}}_{H_0} \quad \eta
\end{equation}
where $s_k$ is the number of decisions that are equal to one and originated from level $k$, and the threshold $\eta$ is chosen in order to minimize the missed detection probability $(P_M)$ while keeping the false alarm probability $(P_F)$ below a fixed value $\delta$.\footnote{This type of problem setup is important, for instance, in Cognitive Radio Networks (CRN). In order to coexist with the primary user (PU), secondary users (SUs) must guarantee that their transmissions will not interfere with the transmission of the PU who have higher priority to access the spectrum.
}

Next, we derive a closed form expression for the optimal missed detection error exponent for tree networks in the presence of Byzantines, which will later be used as a surrogate for the probability of missed detection.

\begin{proposition}
For a $K$ level tree network employing the detection scheme as given in~\eqref{LRT}, the asymptotic detection performance can be characterized using the missed detection error exponent given below
\begin{equation}
\label{kld}
D=\sum_{k=1}^{K} {N_k} \left[\sum_{j \in \{0,1\}} \pi_{j,0}^k\log \dfrac{\pi_{j,0}^k}{\pi_{j,1}^k}\right].
\end{equation}
\end{proposition}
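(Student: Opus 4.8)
The plan is to recognize the statement as an instance of the Chernoff--Stein lemma applied to the test in~\eqref{LRT}. Under the Neyman--Pearson criterion with the false alarm probability constrained by $P_F \le \delta$, the classical result (Stein's lemma~\cite{cover}) asserts that the best achievable missed detection error exponent equals the Kullback--Leibler divergence between the distribution of the data received at the FC under $H_0$ and that under $H_1$. Thus the whole task reduces to computing this divergence for the joint law of the received vector $[\mathbf{z_1},\cdots,\mathbf{z_K}]$ and showing that it collapses to the per-level sum in~\eqref{kld}.

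First I would argue that, conditioned on either hypothesis, the bits $z_i$ arriving at the FC are mutually independent. This follows from the conditional independence of the observations $Y_{k,i}$ assumed in Section~\ref{Network}, together with the fact that each node's local decision depends only on its own observation and that the Byzantine corruption acts as a memoryless per-bit channel whose net effect is summarized by the flipping probabilities $\beta_{\bar{x},x}^k$. Consequently the joint distribution factorizes as a product over all nodes, and because the KL divergence of a product distribution is the sum of the component divergences, the total exponent becomes $\sum_{k=1}^{K}\sum_{i=1}^{N_k}\left[\sum_{j\in\{0,1\}}\pi_{j,0}^k\log\frac{\pi_{j,0}^k}{\pi_{j,1}^k}\right]$.

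Next, since every node at a given level $k$ shares the same marginal laws $\pi_{j,0}^k$ and $\pi_{j,1}^k$ (the distributions are indexed by the level only, not by the individual node, as is evident from the product structure of~\eqref{LRT}), each of the $N_k$ inner terms is identical. Collecting them yields the weight $N_k$ multiplying the single-node divergence $\sum_{j\in\{0,1\}}\pi_{j,0}^k\log(\pi_{j,0}^k/\pi_{j,1}^k)$, which is precisely~\eqref{kld}.

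The main obstacle will be rigorously justifying the two structural ingredients rather than the algebra: (i) that the received bits are genuinely conditionally independent (as opposed to being merely marginally specified through $\beta_{\bar{x},x}^k$), so that the additive decomposition of the KL divergence is exact --- here the non-overlapping attack configuration, which guarantees at most one Byzantine per root-to-leaf path, is what keeps the per-bit corruptions from coupling; and (ii) that the asymptotic regime is the one in which Stein's lemma applies, i.e., $N_k\to\infty$ with the attack fractions $\alpha_k$ and the tree parameters held fixed, so that $P_M$ decays with exponent $D$ as defined above. Verifying that the likelihood-ratio test in~\eqref{LRT}, which depends on the data only through the level-wise counts $s_k$, is indeed the optimal test for this product model is what licenses invoking Stein's lemma in the first place.
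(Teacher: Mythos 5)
Your proposal is correct and follows essentially the same route as the paper: invoke Stein's lemma to identify the optimal missed detection exponent with the KL divergence between the received data distributions under $H_0$ and $H_1$, then use additivity of the KLD over independent components and the fact that all $N_k$ bits from level $k$ share the marginals $\pi_{j,0}^k,\pi_{j,1}^k$ to obtain~\eqref{kld}. The only (cosmetic) difference is that the paper groups the data into the $N_1$ i.i.d.\ subtree vectors $\mathbf{Z}_i$ forwarded by the level-$1$ nodes before applying Stein's lemma, whereas you factorize directly over individual bits; your explicit flagging of the independence question (which in the paper is really settled by the modeling approximation $\beta_{\bar{x},x}^k$ rather than by the non-overlapping condition alone) is, if anything, more careful than the paper's one-line justification.
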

\begin{proof}
Let $\mathbf{Z}=[\mathbf{Z}_1,\cdots,\mathbf{Z}_{N_1}]$ denote the received decision vectors from the nodes at level $1$, where $\mathbf{Z_i}$ is the decision vector forwarded by the node $i$ at level $1$ to the FC. 
Observe that, $\mathbf{Z}_i$ for $i=1$ to $N_1$ are independent and identically distributed (i.i.d.). 
Therefore, using Stein's lemma~\cite{cover}, the optimal error exponent for the detection scheme as given in~\eqref{LRT} is the Kullback-Leibler divergence (KLD) \cite{Kullback} between the distributions $P(\mathbf{Z}|H_{0})$ and $P(\mathbf{Z}|H_{1})$. 
Summation term in~\eqref{kld} follows from the additive property of the KLD for independent distributions.
\end{proof}
Note that,~\eqref{kld} can be compactly written as $\sum_{k=1}^{K} {N_k} D_k(\pi_{j,1}^k||\pi_{j,0}^k)$ with $D_k(\pi_{j,1}^k||\pi_{j,0}^k)$ being the KLD between the data coming from node $i$ at level $k$ under $H_{0}$ and $H_{1}$. 
The FC wants to maximize the detection performance, while, the Byzantine attacker wants to degrade the detection performance as much as possible which can be achieved by maximizing and minimizing the KLD, respectively. 
Next, we explore the optimal attacking strategies for the Byzantines that degrade the detection performance most by minimizing the KLD.

\section{Optimal Byzantine Attack}
\label{sec3}
As discussed earlier, the Byzantines attempt to make the KL divergence as small as possible. Since the KLD is always non-negative, Byzantines attempt to choose $P(z_i=j|H_{0},k)$ and $P(z_i=j|H_{1},k)$ such that $D_k=0,\;\forall k$. In this case, an adversary can make the data that the FC receives from the nodes such that no information is conveyed from them. This is possible when
\begin{equation}
P(z_{i}=j|H_{0},k)=P(z_{i}=j|H_{1},k)\qquad \forall j\in \{0,1\},\;\forall k.
\label{eq10}
\end{equation}
Notice that, $\pi_{j,0}^k=P(z_{i}=j|H_{0},k)$ and $\pi_{j,1}^k=P(z_{i}=j|H_{1},k)$ can be expressed as
\begin{eqnarray}
&&
\pi_{1,0}^k
= \beta_{1,0}^k (1-P_{fa}^k)+(1-\beta_{0,1}^k) P_{fa}^k\label{eq3}\\
&& \pi_{1,1}^k 
= \beta_{1,0}^k (1-P_{d}^k)+(1-\beta_{0,1}^k) P_{d}^k.\label{eq4}
\end{eqnarray}
with $\beta_{1,0}^k=\sum_{j=1}^{k}\alpha_j P_{1,0}^j$ and $\beta_{0,1}^k=\sum_{j=1}^{k}\alpha_j P_{0,1}^j$.  Substituting \eqref{eq3} and \eqref{eq4} in \eqref{eq10} and after simplification, the condition to make the $D=0$ for a $K$-level network becomes $\sum_{j=1}^{k}\alpha_j(P_{1,0}^j+P_{0,1}^j)=1,\;\forall k$.
%
Notice that, when $\sum_{j=1}^{k}\alpha_j<0.5$, there does not exist any attacking probability distribution $(P_{0,1}^j,P_{1,0}^j)$ that can make $D_k = 0$, and, therefore, the KLD cannot be made zero. In the case of $\sum_{j=1}^{k}\alpha_j=0.5$, there exists a unique solution $(P_{0,0}^j,P_{1,0}^j)=(1,1)$, $\forall j$ that can make $D_k = 0$, $\forall k$. For the $\sum_{j=1}^{k}\alpha_j>0.5$ case, there exist infinitely many attacking probability distributions $(P_{0,1}^j,P_{1,0}^j)$ which can make $D_k = 0$, $\forall k$.
\begin{lemma}
In a tree network with $K$ levels, the minimum number of Byzantines needed to blind the FC (or to make $D_k=0,\;\forall k$) is given by $  B_1=\left \lceil\frac{N_1}{2} \right \rceil$.
\end{lemma}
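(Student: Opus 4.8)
The plan is to read the blinding condition derived immediately above the lemma, namely
\begin{equation}
\sum_{j=1}^{k}\alpha_j\left(P_{1,0}^j+P_{0,1}^j\right)=1,\qquad \forall k\in\{1,\dots,K\},
\end{equation}
as a system of $K$ coupled equations in the attack parameters, and to extract from it the irreducible requirement on the number of level-$1$ Byzantines. First I would introduce the shorthand $t_j=\alpha_j\left(P_{1,0}^j+P_{0,1}^j\right)\ge 0$, so that the blinding condition reads $\sum_{j=1}^{k} t_j=1$ for every $k$.

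The key step is a telescoping argument: subtracting the equation for $k-1$ from the equation for $k$ gives $t_k=0$ for every $k\ge 2$, while the $k=1$ equation gives $t_1=1$. In other words, blinding forces $\alpha_1\left(P_{1,0}^1+P_{0,1}^1\right)=1$ together with $\alpha_k\left(P_{1,0}^k+P_{0,1}^k\right)=0$ for all $k\ge 2$. Since the flipping probabilities satisfy $P_{1,0}^1,P_{0,1}^1\in[0,1]$, we have $P_{1,0}^1+P_{0,1}^1\le 2$, and the relation $t_1=1$ then forces $\alpha_1\ge 1/2$. Recalling $\alpha_1=B_1/N_1$ and that $B_1$ is an integer, this yields $B_1\ge\lceil N_1/2\rceil$.

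It remains to verify achievability and that concentrating the attack at level $1$ is optimal. For achievability I would set $B_1=\lceil N_1/2\rceil$, $B_k=0$ for $k\ge 2$, and pick level-$1$ flipping probabilities (for instance $P_{1,0}^1=P_{0,1}^1=1$ when $N_1$ is even, or a suitable pair summing to $1/\alpha_1$ otherwise) so that $t_1=1$; with no Byzantines above level $1$ the remaining conditions $t_k=0$ hold trivially, so that $D_k=0$ for all $k$ using exactly $\lceil N_1/2\rceil$ Byzantines. For optimality, observe that the $k=1$ equation involves only $\alpha_1$ and the level-$1$ flipping probabilities, so it can never be satisfied by placing Byzantines at deeper levels; hence at least $\lceil N_1/2\rceil$ level-$1$ Byzantines are always required, and any additional Byzantines elsewhere only increase the total count. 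Combining the lower bound with the achievable configuration gives the claimed minimum of $\lceil N_1/2\rceil$.

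The main obstacle I anticipate is the optimality part rather than the algebra: one must argue cleanly that no clever distribution of Byzantines across levels can undercut the level-$1$ requirement. The telescoping identity makes this transparent—because the $k=1$ constraint decouples and depends solely on level-$1$ parameters—but care is needed to state that deeper Byzantines are not merely unhelpful but strictly wasteful, so that the unique cost-minimizing configuration places all Byzantines at the first level.
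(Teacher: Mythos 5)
Your proof is correct and follows essentially the same route as the paper: both start from the blinding condition $\sum_{j=1}^{k}\alpha_j(P_{1,0}^j+P_{0,1}^j)=1$ for all $k$ and telescope it to isolate the $k=1$ constraint, which forces $\alpha_1\geq 1/2$ and hence $B_1\geq\lceil N_1/2\rceil$, with achievability at level $1$ alone. The paper's own proof is a single sentence asserting the equivalence of $\sum_{j=1}^{k}\alpha_j=0.5\;\forall k$ with $\alpha_1=0.5,\ \alpha_k=0\ (k\geq 2)$; your version simply makes explicit the telescoping, the bound $P_{1,0}^1+P_{0,1}^1\leq 2$, the integrality rounding, and the achievability check that the paper leaves implicit.
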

\begin{proof}
The proof follows from the fact that the condition $\sum_{j=1}^{k}\alpha_j=0.5,\;\forall k$, is equivalent to $\alpha_1=0.5,\;\alpha_k=0,\;\forall k=2,\cdots,K$. 
\end{proof}
Next, we explore the optimal attacking probability distribution $(P_{0,1}^{k}, P_{1,0}^{k})$ that minimizes $D_k$ when $\sum_{j=1}^{k}\alpha_j < 0.5$, i.e., in the case where the attacker cannot make $D=0$. 
To analyze the problem, first we investigate the properties of $D_k$ with respect to $(P_{0,1}^{k}, P_{1,0}^{k})$ assuming $(P_{0,1}^{j}, P_{1,0}^{j}),\;1\leq j\leq k-1$ to be fixed. We show that attacking with symmetric flipping probabilities is the optimal strategy in the region where the attacker cannot make $D_k=0$. In other words, attacking with $P_{1,0}^k=P_{0,1}^k$ is the optimal strategy for the Byzantines.
\begin{lemma}
\label{lem}
In the region where the attacker cannot make $D_k=0$, i.e., for $\sum_{j=1}^{k}\alpha_j<0.5$, the optimal attacking strategy comprises of symmetric flipping probabilities $(P_{0,1}^{k}= P_{1,0}^{k}=p)$. In other words, any non zero deviation $\epsilon_i\in(0,p]$ in flipping probabilities $(P_{0,1}^{k}, P_{1,0}^{k})=(p-\epsilon_1,p-\epsilon_2)$, where $\epsilon_1\neq \epsilon_2$, will result in an increase in $D_k$.
\end{lemma}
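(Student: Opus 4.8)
The plan is to show that $D_k$ is strictly monotonically decreasing in \emph{each} of the two level-$k$ flipping probabilities throughout the region $\sum_{j=1}^{k}\alpha_j<0.5$; the lemma then follows at once, because passing from $(P_{0,1}^k,P_{1,0}^k)=(p,p)$ to $(p-\epsilon_1,p-\epsilon_2)$ with $\epsilon_1,\epsilon_2\in(0,p]$ merely \emph{lowers both} coordinates. First I would freeze the lower-level strategies $(P_{0,1}^j,P_{1,0}^j)$, $1\le j\le k-1$, so that $\beta_{1,0}^k$ is affine in $P_{1,0}^k$ and $\beta_{0,1}^k$ is affine in $P_{0,1}^k$. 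Substituting \eqref{eq3} and \eqref{eq4}, both $\pi_{1,0}^k$ and $\pi_{1,1}^k$ become affine in each flipping probability, and $D_k=\pi_{0,0}^k\log\frac{\pi_{0,0}^k}{\pi_{0,1}^k}+\pi_{1,0}^k\log\frac{\pi_{1,0}^k}{\pi_{1,1}^k}$ is exactly the Kullback--Leibler divergence between two Bernoulli laws with parameters $\pi_{1,0}^k$ and $\pi_{1,1}^k$. I would also record the identity $\pi_{1,1}^k-\pi_{1,0}^k=(P_d^k-P_{fa}^k)(1-\beta_{1,0}^k-\beta_{0,1}^k)$, so that $D_k=0$ holds \emph{precisely} on the blinding hyperplane $\beta_{1,0}^k+\beta_{0,1}^k=1$, while $D_k>0$ everywhere in the region of interest, since there $\beta_{1,0}^k+\beta_{0,1}^k\le 2\sum_{j=1}^{k}\alpha_j<1$.

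The key idea is to extract the monotonicity \emph{without} differentiating the awkward logarithmic form of $D_k$. Fix $P_{0,1}^k$ and view $D_k$ as a function of $x=\beta_{1,0}^k$ alone. Since the Bernoulli KL divergence is jointly convex in its pair of arguments and $(\pi_{1,0}^k,\pi_{1,1}^k)$ is an affine function of $x$, the map $x\mapsto D_k$ is convex on the interval that runs up to $x=1-\beta_{0,1}^k$. At that right endpoint the blinding condition $\beta_{1,0}^k+\beta_{0,1}^k=1$ is met, so $D_k=0$, which is the global minimum of a nonnegative quantity. A convex function that attains its minimum at the right endpoint of an interval is non-increasing on the entire interval; hence $D_k$ is non-increasing in $x$, and strictly decreasing wherever $D_k>0$, i.e.\ throughout $\sum_{j=1}^{k}\alpha_j<0.5$. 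Because $x$ is increasing in $P_{1,0}^k$, this gives that $D_k$ strictly decreases as $P_{1,0}^k$ grows. The identical argument with the two probabilities interchanged (now letting $\beta_{0,1}^k$ approach $1-\beta_{1,0}^k$) shows $D_k$ strictly decreases as $P_{0,1}^k$ grows.

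Chaining the two monotonicities proves the statement: for any $\epsilon_1,\epsilon_2\in(0,p]$ one has $D_k(p-\epsilon_1,p-\epsilon_2)>D_k(p,p-\epsilon_2)>D_k(p,p)$, so every downward deviation---whether or not $\epsilon_1=\epsilon_2$---strictly increases $D_k$, and the minimizer is the symmetric point with the largest feasible flipping. The main obstacle is handling the two-dimensional optimization cleanly: a head-on assault through the gradient of $D_k$ is unpleasant, because $P_d^k\ne P_{fa}^k$ destroys any swap symmetry between $P_{0,1}^k$ and $P_{1,0}^k$ and the partial derivatives are sign-indefinite logarithmic/rational expressions. The insight that removes this difficulty is to pair joint convexity of the KL divergence with the fact that it vanishes exactly on the blinding boundary; the only points needing care are checking that $(\pi_{1,0}^k,\pi_{1,1}^k)$ stays a valid probability pair along the extended interval up to the blinding point, and confirming that this point is indeed the right endpoint so that the ``minimum at the endpoint'' argument applies.
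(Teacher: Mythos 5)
Your proof is correct, but it takes a genuinely different route from the paper's. The paper proves the lemma in Appendix~A by brute force: it restricts to one-coordinate deviations $(p,p-\epsilon)$ and $(p-\epsilon,p)$, computes $\tfrac{dD_k}{d\epsilon}$ explicitly, and verifies its sign through a chain of algebraic rearrangements capped by the two-sided logarithm inequality $(x-1)\geq\log x\geq \tfrac{x-1}{x}$ --- roughly a page of manipulation per case, some of which (e.g.\ the inequality culminating in \eqref{eq-7}) is then recycled in the proof of Theorem~\ref{th3}. You instead observe that $D_k$ is the Bernoulli KL divergence evaluated at a pair $(\pi_{1,0}^k,\pi_{1,1}^k)$ that is affine in each of $\beta_{1,0}^k$ and $\beta_{0,1}^k$, invoke joint convexity of the KL divergence, and use the identity $\pi_{1,1}^k-\pi_{1,0}^k=(P_d^k-P_{fa}^k)(1-\beta_{1,0}^k-\beta_{0,1}^k)$ to place the zero of $D_k$ exactly at the right endpoint of the (validly extended) interval; a nonnegative convex function minimized at the right endpoint is non-increasing, and strictly decreasing wherever positive. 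This yields coordinate-wise strict monotonicity in both flipping probabilities, and chaining the two handles the general deviation $(p-\epsilon_1,p-\epsilon_2)$ with both $\epsilon_i>0$ --- which is, strictly speaking, the case the lemma statement actually asserts and which the paper's proof covers only implicitly through the same kind of chaining. Your argument is shorter, covers Theorem~\ref{th3} essentially for free (the minimizer over the box is the corner $(1,1)$), and makes the role of the blinding hyperplane transparent; the paper's argument is more elementary and self-contained, avoiding any appeal to joint convexity. The only caveats, which you correctly flag and which apply equally to the paper's derivative computation, are that strictness requires $\alpha_k>0$ and $P_d^k>P_{fa}^k$, and that one must check the affine extension up to the blinding point keeps $(\pi_{1,0}^k,\pi_{1,1}^k)$ in $[0,1]^2$ --- it does, since at $\beta_{1,0}^k=1-\beta_{0,1}^k$ both probabilities equal $1-\beta_{0,1}^k\in[0,1]$.
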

\begin{proof}
Please see Appendix~\ref{ap1}.
\end{proof} 

In the next theorem, we present the solution for the optimal attacking probability distribution $(P_{j,1}^{k}, P_{j,0}^{k})$ that minimizes $D_k$ in the region where the attacker cannot make $D_k=0$.
\begin{theorem}
\label{th3}
In the region where the attacker cannot make $D_k=0$, i.e., for $\sum_{j=1}^{k}\alpha_j<0.5$, the optimal attacking strategy is given by $(P_{0,1}^{k}, P_{1,0}^{k})=(1,1)$.
\end{theorem}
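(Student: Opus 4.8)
The plan is to use Lemma~\ref{lem} to collapse the two-dimensional optimization to a one-dimensional one, and then exploit the convexity of the Kullback--Leibler divergence to pin down the optimizer without ever signing a messy derivative. First I would invoke Lemma~\ref{lem}: in the region $\sum_{j=1}^{k}\alpha_j<0.5$ the optimal flipping probabilities at level $k$ are symmetric, so it suffices to set $P_{0,1}^{k}=P_{1,0}^{k}=p$ and minimize $D_k$ over the single scalar $p\in[0,1]$, with the lower-level probabilities $(P_{0,1}^{j},P_{1,0}^{j})$, $j<k$, held fixed.

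Next I would substitute $P_{0,1}^{k}=P_{1,0}^{k}=p$ into~\eqref{eq3}--\eqref{eq4}. Writing $\phi=1-\beta_{1,0}^{k}-\beta_{0,1}^{k}$ and collecting terms, both received distributions take the affine form
\[
\pi_{1,0}^{k}=m+\phi\Big(P_{fa}^{k}-\tfrac12\Big),\qquad \pi_{1,1}^{k}=m+\phi\Big(P_{d}^{k}-\tfrac12\Big),
\]
where $m$ is a constant not depending on $p$. The crucial observation is that raising $p$ increases $\beta_{1,0}^{k}+\beta_{0,1}^{k}$ at rate $2\alpha_k$ and hence strictly decreases $\phi$; moreover $\pi_{1,0}^{k}=\pi_{1,1}^{k}=m$ exactly when $\phi=0$, i.e.\ precisely at the (unreachable, by Lemma~\ref{lem} and the preceding discussion) blinding condition $\sum_{j=1}^{k}\alpha_j(P_{1,0}^{j}+P_{0,1}^{j})=1$. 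Thus, as $p$ grows, the pair $(\pi_{1,0}^{k},\pi_{1,1}^{k})$ travels along a straight ray that emanates from the diagonal point $(m,m)$, and pushing $p$ to its largest admissible value drives the two conditional distributions as close together as possible.

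Finally I would identify $D_k$ in~\eqref{kld} with the binary KLD between $\mathrm{Bernoulli}(\pi_{1,0}^{k})$ and $\mathrm{Bernoulli}(\pi_{1,1}^{k})$, which is jointly convex in its two arguments and vanishes exactly on the diagonal. Restricting this jointly convex map to the affine line $\phi\mapsto\big(m+(P_{fa}^{k}-\tfrac12)\phi,\;m+(P_{d}^{k}-\tfrac12)\phi\big)$ yields a one-dimensional convex function of $\phi$ whose global minimum, value $0$, is attained at $\phi=0$. A convex function with a global minimizer at $\phi=0$ is non-decreasing for $\phi\ge0$, so $D_k$ is a non-decreasing function of $\phi$, equivalently a non-increasing function of $p$. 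Hence $D_k$ is minimized at the largest feasible $p$, namely $p=1$, giving $(P_{0,1}^{k},P_{1,0}^{k})=(1,1)$.

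The main obstacle is establishing the monotonicity of $D_k$ in $p$: a direct attack would compute $dD_k/dp$ by the chain rule and try to sign it, but the two partial derivatives $\partial D_k/\partial\pi_{1,0}^{k}$ and $\partial D_k/\partial\pi_{1,1}^{k}$ carry opposite signs and $P_{fa}^{k}-\tfrac12$ may itself be negative, so the sign of the sum is awkward to control by brute force. The convexity argument sidesteps this entirely, provided one first verifies the two facts it rests on---that the parametrization traces a straight ray through the diagonal point $(m,m)$ at $\phi=0$, and that $\phi$ is strictly decreasing in $p$---together with the mild check that $\pi_{1,0}^{k},\pi_{1,1}^{k}$ remain in $(0,1)$ throughout the region $\sum_{j=1}^{k}\alpha_j<0.5$, so that the divergence stays finite.
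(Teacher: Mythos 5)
Your proof is correct, and while it begins exactly as the paper does---invoking Lemma~\ref{lem} to reduce to symmetric flipping $P_{0,1}^{k}=P_{1,0}^{k}=p$ and then showing that $D_k$ decreases in $p$---the monotonicity step is argued by a genuinely different route. The paper differentiates $D_k$ in $p$ directly, substitutes $\pi_{1,1}^{k'}=\alpha_k(1-2P_d^k)$ and $\pi_{1,0}^{k'}=\alpha_k(1-2P_{fa}^k)$, and signs the resulting condition \eqref{log2} through a chain of algebraic manipulations that leans on the appendix inequality \eqref{eq-7} and the logarithm bounds $(x-1)\geq\log x\geq (x-1)/x$. You instead note that under symmetric flipping $\beta_{1,0}^{k}-\beta_{0,1}^{k}$ is independent of $p$, so $(\pi_{1,0}^{k},\pi_{1,1}^{k})$ from \eqref{eq3}--\eqref{eq4} traces the affine ray $\phi\mapsto\bigl(m+(P_{fa}^{k}-\tfrac12)\phi,\;m+(P_{d}^{k}-\tfrac12)\phi\bigr)$ with $\phi=1-\beta_{1,0}^{k}-\beta_{0,1}^{k}$ strictly decreasing in $p$ at rate $2\alpha_k$ and meeting the diagonal precisely at the blinding condition $\phi=0$; joint convexity of the binary KLD composed with this affine map makes $D_k$ a convex function of $\phi$ whose global minimum value $0$ sits at $\phi=0$, hence non-decreasing on $\phi\geq 0$ and non-increasing in $p$, forcing $p=1$. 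Your supporting checks are the right ones and all hold: $m=\tfrac12+\tfrac12(\beta_{1,0}^{k-1}-\beta_{0,1}^{k-1})$ is indeed $p$-free, $\phi>1-2\sum_{j\leq k}\alpha_j>0$ throughout the region, and the $\pi$'s stay in $(0,1)$. The convexity route buys a sign-free argument that sidesteps the paper's delicate inequality chasing and harmonizes with the paper's own Lemma~\ref{equiv}, which likewise exploits convexity under affine transformations. The one thing the paper's computation delivers that yours does not, as written, is strictness ($dD_k/dp<0$), which matters if one wants $(1,1)$ to be the \emph{unique} minimizer; this is recovered in one line by observing that a nonnegative convex function vanishing only at $\phi=0$ (the two Bernoulli parameters differ whenever $\phi>0$, since $P_d^k>P_{fa}^k$) is strictly increasing on $\phi>0$.
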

\begin{proof}
Observe that, in the region where the attacker cannot make $D_k=0$, the optimal strategy comprises of symmetric flipping probabilities $(P_{0,1}^{k}= P_{1,0}^{k}=p)$. The proof is complete if we show that $D_k$ is a monotonically decreasing function of the flipping probability $p$.

After plugging in $(P_{0,1}^{k}, P_{1,0}^{k})=(p,p)$ in \eqref{eq3} and \eqref{eq4}, we get
\begin{eqnarray}
\pi_{1,1}^k&=&[\beta_{1,0}^{k-1}(1-P_d^{k})+(1-\beta_{0,1}^{k-1})P_d^{k}]+[\alpha_k(p-P_d^k(2p))+P_d^k]\\
\pi_{1,0}^k&=&[\beta_{1,0}^{k-1}(1-P_{fa}^{k})+(1-\beta_{0,1}^{k-1})P_{fa}^{k}]+[\alpha_k(p-P_{fa}^k(2p))+P_{fa}^k].
\end{eqnarray}
Now we show that $D_k$ is a monotonically  decreasing function of the parameter $p$ or in other words, $\dfrac{dD_k}{dp}<0$. After plugging in $\pi_{1,1}^{k'}=\alpha_k(1-2P_d^k)$ and $\pi_{1,0}^{k'}=\alpha_k(1-2P_{fa}^k)$ in the expression of $\dfrac{dD_k}{dp}$ and rearranging the terms, the condition $\dfrac{dD_k}{dp}<0$ becomes 
\begin{eqnarray}
\label{log2}
(1-2P_{d}^k)\left(\dfrac{1-\pi_{1,0}^k}{1-\pi_{1,1}^k}-\dfrac{\pi_{1,0}^k}{\pi_{1,1}^k}\right)
+(1-2P_{fa}^k)\log\left(\dfrac{1-\pi_{1,1}^k}{1-\pi_{1,0}^k}\dfrac{\pi_{1,0}^k}{\pi_{1,1}^k}\right)<0
\end{eqnarray}
Since $P_d^k>P_{fa}^k$ and $\beta_{\bar{x},x}^k<0.5$, we have $\pi_{1,1}^k>\pi_{1,0}^k$. Now, using the fact that $\dfrac{1-P_{d}^k}{1-P_{fa}^k}>\dfrac{1-2P_{d}^k}{1-2P_{fa}^k}$ and \eqref{eq-7}, we have
\begin{small} 
\begin{eqnarray}
&&
\dfrac{1-2P_{d}^k}{1-2P_{fa}^k}\left[\dfrac{1-\pi_{1,0}^k}{1-\pi_{1,1}^k}-\dfrac{\pi_{1,0}^k}{\pi_{1,1}^k}\right]<(\pi_{1,1}^k-\pi_{1,0}^k)\left[\dfrac{1}{\pi_{1,1}^k}+\dfrac{1}{1-\pi_{1,0}^k}\right]\\
&\Leftrightarrow&
\dfrac{1-2P_{d}^k}{1-2P_{fa}^k}\left[\dfrac{1-\pi_{1,0}^k}{1-\pi_{1,1}^k}-\dfrac{\pi_{1,0}^k}{\pi_{1,1}^k}\right]+\left[\dfrac{\pi_{1,0}^k}{\pi_{1,1}^k}-1\right]<
1-\dfrac{1-\pi_{1,1}^k}{1-\pi_{1,0}^k}.\label{tin}
\end{eqnarray}
\end{small}
Applying the logarithm inequality $(x-1)\geq\log x \geq \dfrac{x-1}{x}$, for $x>0$ to~\eqref{tin}, one can prove that \eqref{log2} is true.
\end{proof}
Next, to gain insights into the solution, we present some numerical results in Figure~\ref{dvsp}.
We plot $D_k$ as a function of the flipping probabilities $(P_{1,0}^k,\;P_{0,1}^k)$. We assume that the probability of detection is $P_d^k=0.8$, the probability of false alarm is $P_{fa}^k=0.2$, and the probability that the bit coming from level $k$ encounters a Byzantine is $\sum_{j=1}^{k}\alpha_j = 0.4$. 
We also assume that $P_{0,1}^{k}=P_{0,1}$ and $P_{1,0}^{k}=P_{1,0},\forall k$.
It can be seen that the optimal attacking strategy comprises of symmetric flipping probabilities and is given by $(P_{0,1}^{k}, P_{1,0}^{k})=(1,1)$, which corroborates our theoretical result presented in Lemma~\ref{lem} and Theorem~\ref{th3}.

\begin{figure}[t]
  \centering
    \includegraphics[height=2.5in, width=!]{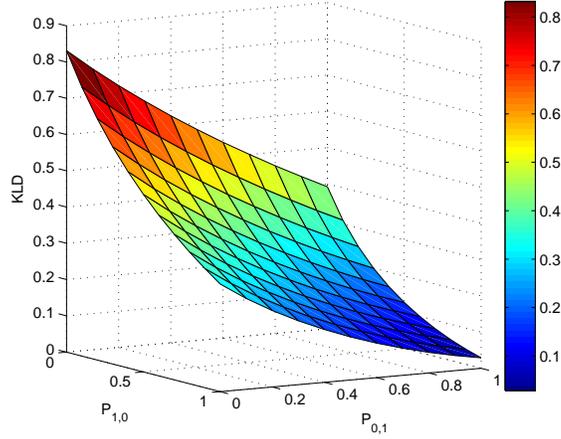}
    \vspace*{-0.1in}
    \caption{KLD $D_k$ vs. flipping probabilities when $P_d^k = 0.8$, 
$P_{fa}^k = 0.2$, and the probability that the bit coming from level $k$ encounters a Byzantine is $\sum_{j=1}^{k}\alpha_j = 0.4$.}\label{dvsp}
  \vspace*{-0.15in}
\end{figure}

We have shown that, for all $k$, 
\begin{equation}
D_k(P_{0,1}^{k}, P_{1,0}^{k})\geq D_k(1,1).\label{ho}
\end{equation} 
Now, by multiplying both sides of \eqref{ho} by $N_k$ and summing it over all $K$ we can show that the KLD, $D$, is minimized by $(P_{0,1}^{k}, P_{1,0}^{k})=(1,1)$, for all $k$, in the region $\sum\limits_{k=1}^{K}\alpha_k<0.5$.

Now, we explore some properties of $D_k$ with respect to $\sum_{j=1}^{k}\alpha_j$ in the region where the attacker cannot make $D_k=0$, i.e., for $\sum_{j=1}^{k}\alpha_j<0.5$. This analysis will later be used in exploring the problem from the network designer's perspective.

\begin{lemma}
\label{equiv}
$D_k^*$ =$\underset{(P_{j,1}^{k}, P_{j,0}^{k})}{\text{min}} D_k(\pi_{j,1}^k||\pi_{j,0}^k)$ is a continuous, decreasing and convex function of $\sum_{j=1}^{k}\alpha_j$ for  $\sum_{j=1}^{k}\alpha_j<0.5$.
\end{lemma}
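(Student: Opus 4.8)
The plan is to first invoke Theorem~\ref{th3} to collapse $D_k^*$ into an explicit one-parameter function, then to recognize that function as the relative entropy between two Bernoulli laws that are convex mixtures of fixed laws with the uniform law; the three claimed properties (continuity, monotone decrease, convexity) will then drop out of elementary facts about the KLD rather than from any delicate computation.

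First I would substitute the optimal attack into \eqref{eq3}--\eqref{eq4}. By Theorem~\ref{th3} the minimizer in the regime $\sum_{j=1}^{k}\alpha_j<1/2$ is $(P_{0,1}^{j},P_{1,0}^{j})=(1,1)$ at every level $j\le k$, so that $\beta_{1,0}^k=\beta_{0,1}^k=\sum_{j=1}^{k}\alpha_j$. Writing $t:=\sum_{j=1}^{k}\alpha_j$, substitution yields the explicit forms $\pi_{1,0}^k(t)=(1-2t)P_{fa}^k+t$ and $\pi_{1,1}^k(t)=(1-2t)P_d^k+t$. The point of this step is that, \emph{after} optimization, $D_k^*$ depends on the individual $\alpha_j$ only through the single number $t$, which is precisely the object the statement asserts it to be a function of. Continuity is then immediate: $\pi_{1,0}^k(t)$ and $\pi_{1,1}^k(t)$ are affine in $t$, and for $t\in[0,1/2)$ each is a convex combination of a number in $(0,1)$ (namely $P_{fa}^k$ or $P_d^k$) with $1/2$, hence stays in $(0,1)$; since the KLD of two full-support Bernoulli laws is smooth in their parameters, $D_k^*(t)$ is continuous (indeed analytic) on $[0,1/2)$.

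For monotonicity and convexity I would exploit the mixture structure revealed above. Setting $s=2t$ and letting $U=\mathrm{Bern}(1/2)$ denote the uniform law, the conditional distributions of $z$ at level $k$ become $P_0^{(t)}=(1-s)\,\mathrm{Bern}(P_{fa}^k)+s\,U$ and $P_1^{(t)}=(1-s)\,\mathrm{Bern}(P_d^k)+s\,U$; that is, \emph{both} hypotheses are contaminated by the \emph{same} uniform law with common weight $s$, and both are affine in $s$ (hence in $t$). Joint convexity of relative entropy then gives both properties at once. For convexity, if $t=\lambda t_1+(1-\lambda)t_2$ then affineness gives $P_j^{(t)}=\lambda P_j^{(t_1)}+(1-\lambda)P_j^{(t_2)}$ for $j=0,1$, so $D_k^*(t)\le\lambda D_k^*(t_1)+(1-\lambda)D_k^*(t_2)$. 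For the decreasing property, given $t_1<t_2<1/2$ one checks directly that $P_j^{(t_2)}=\mu P_j^{(t_1)}+(1-\mu)U$ with the same $\mu=\frac{1-2t_2}{1-2t_1}\in(0,1)$ for both $j$, whence joint convexity together with $D(U\|U)=0$ yields $D_k^*(t_2)\le\mu\,D_k^*(t_1)<D_k^*(t_1)$, the last inequality being strict because $P_d^k>P_{fa}^k$ forces $D_k^*(t_1)>0$.

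The hard part will not be the analysis but the reduction in the first step: one must be careful that minimizing $D_k$ over the level-$k$ strategy, combined with the optimal choices at the lower levels supplied by Theorem~\ref{th3}, genuinely collapses all dependence onto $t=\sum_{j=1}^{k}\alpha_j$ and leaves $P_d^k,P_{fa}^k$ as the only remaining (fixed) parameters. Once the mixture representation is in hand, continuity, convexity and monotone decrease are routine consequences of the standard joint convexity of the KLD. A purely computational route --- establishing $dD_k^*/dt<0$ and $d^2D_k^*/dt^2\ge0$ by direct differentiation, in the spirit of the proof of Theorem~\ref{th3} --- is available as well, but it is considerably messier, so I would favor the information-theoretic argument above.
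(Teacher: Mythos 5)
Your proof is correct, and its skeleton matches the paper's: both reduce $D_k^*$ to a one-variable function of $t=\sum_{j=1}^{k}\alpha_j$ by plugging in the optimal attack $(P_{0,1}^{j},P_{1,0}^{j})=(1,1)$ from Theorem~\ref{th3}, both get continuity from continuity of the KLD in the underlying distributions, and both get convexity from convexity of the KLD composed with the affine maps $t\mapsto\pi_{1,0}^k(t),\pi_{1,1}^k(t)$ (the paper states this as ``convexity holds under affine transformation''; your version makes the needed joint convexity explicit, which is the honest form of that claim). Where you genuinely diverge is the decreasing property: the paper differentiates, asserting $dD_k/dt<0$ after substitution without carrying out the computation, whereas you observe that for $t_1<t_2<1/2$ both conditional laws at $t_2$ are the \emph{same} convex combination $\mu P_j^{(t_1)}+(1-\mu)U$ with $\mu=\frac{1-2t_2}{1-2t_1}$, so joint convexity plus $D(U\|U)=0$ gives $D_k^*(t_2)\le\mu\,D_k^*(t_1)<D_k^*(t_1)$. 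Your route buys more with less: it avoids the derivative algebra entirely, yields strictness and even a quantitative contraction rate $\frac{1-2t_2}{1-2t_1}$, and in fact subsumes the convexity claim under the same single tool (joint convexity of relative entropy). The paper's calculus route is more elementary in the tools it invokes but leaves its key inequality unverified on the page. One small point to keep clean in your write-up: the minimum in the lemma is taken over the level-$k$ strategy with the lower-level strategies also set to their optimizers, so you should say explicitly (as you do in your last paragraph) that Theorem~\ref{th3} applied at every level $j\le k$ is what collapses the dependence onto the single scalar $t$; with that stated, there is no gap.
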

\begin{proof}
The continuity of $D_k(\pi_{j,1}^k||\pi_{j,0}^k)$ with respect to the involved distributions implies the continuity of $D_k^*$.
To show that $D_k^*$ is a decreasing function of $t=\sum_{j=1}^{k}\alpha_j$, we use the fact that $\underset{(P_{0,1}^{k}, P_{1,0}^{k})}{\text{arg min}} D_k(\pi_{j,1}^k||\pi_{j,0}^k)$ is equal to $(1,1)$ for $\sum_{j=1}^{k}\alpha_j<0.5$ (as shown in Theorem~\ref{th3}). After plugging $(P_{0,1}^{k}, P_{1,0}^{k})=(1,1),\;\forall k,$ in the KLD expression, it can be shown that  $\dfrac{dD_k}{dt}<0$. Hence, $D_k^*$ is a monotonically decreasing function of $\sum_{j=1}^{k}\alpha_j$ for $\sum_{j=1}^{k}\alpha_j<0.5$. The convexity of $D_k^*$ follows from the fact that $D_k^*(\pi_{j,1}^k||\pi_{j,0}^k)$ is convex in $\pi_{j,1}^k$ and $\pi_{j,0}^k$, which are affine transformations of $\sum_{j=1}^{k}\alpha_j$ (Note that, convexity holds under affine transformation). 
\end{proof}

It is worth noting that Lemma~\ref{equiv} suggests that minimization/maximization of $\sum_{j=1}^{k}\alpha_j$ is equivalent to minimization/maximization of $D_k$. Using this fact, one can consider the probability that the bit coming from level $k$ encounters a Byzantine (i.e., $ t=\sum_{j=1}^{k}\alpha_j$) in lieu of  $D_k$ for optimizing the system performance. Observe that, the expression $t=\sum_{j=1}^{k}\alpha_j$ is much more tractable than the expression for $D_k$.

Next, to gain insights into the solution, we present some numerical results in Figure~\ref{dvst}. We plot $\underset{(P_{j,1}^{k}, P_{j,0}^{k})}{\text{min}} D_k$ as a function of the probability that the bit coming from level $k$ encounters a Byzantine, i.e., $t$. We assume that the probabilities of detection and false alarm are $P_d^k=0.8$ and $P_{fa}^k=0.2$, respectively. Notice that, when $t=0.5$, $D_k$ between the two probability distributions becomes zero. It is seen that $D_k^*$ is a continuous, decreasing and convex function of the fraction of covered nodes, $t$, for $t<0.5$, which corroborates our theoretical result presented in Lemma~\ref{equiv}.

\begin{figure}[t]
  \centering
    \includegraphics[height=2.5in, width=!]{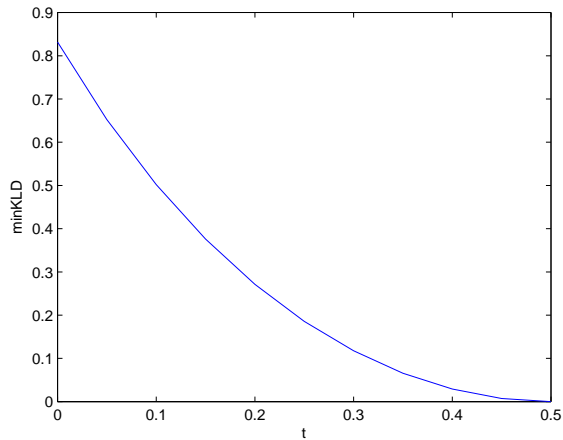}
    \vspace*{-0.1in}
    \caption{$\underset{(P_{j,1}^{k}, P_{j,0}^{k})}{\text{min}}$ $D_k$ vs probability that the bit coming from level $k$ encounters a Byzantine for $P_d^k = 0.8$ and $P_{fa}^k = 0.2$.}\label{dvst}
  \vspace*{-0.15in}
\end{figure}

Until now, we have explored the problem from the attacker's perspective. In the rest of the
paper, we look into the problem from a network designer's perspective and propose techniques
to mitigate the effect of Byzantines. First, we study the problem of designing optimal distributed detection parameters
in a tree network in the presence of Byzantines.

\section{System Design in the Presence of Byzantines}
\label{sec4}
For a fixed attack configuration $\{B_k\}_{k=1}^{K}$, the detection performance at the FC is a function of the local detectors used at the nodes in the tree network and the global detector used at the FC. This motivates us to study the problem of designing detectors, both at the nodes at different levels in a tree and at the FC, such that the detection performance is maximized. More specifically, we are interested in answering the question: How does the knowledge of the attack configuration $\{B_k\}_{k=1}^{K}$ affect the design of optimal distributed detection parameters?

By Stein's lemma~\cite{cover}, we know that in the NP setup for a fixed false alarm probability, the missed detection probability of the optimal detector can be minimized by maximizing the KLD. For an optimal detector at the FC, the problem of designing the local detectors can be formalized as follows:
\begin{equation}
\begin{aligned}
&  \underset{\{P_d^k,P_{fa}^k\}_{k=1}^K}{\text{max}}
& & \sum_{k=1}^{K} N_k \sum_{j \in \{0,1\}}P(z_{i}=j|H_{0},k)\log \dfrac{P(z_{i}=j|H_{0},k)}{P(z_{i}=j|H_{1},k)}. \label{p1}
 \end{aligned}
\end{equation}

The local detector design problem as given in~\eqref{p1} is a non-linear optimization problem. Furthermore, it is difficult to obtain a closed form solution for this problem. Also, observe that the solution space is not constrained to the likelihood ratio based tests. To solve the problem, we need to find the pairs $\{P_{d}^{k},P_{fa}^{k}\}_{k=1}^K$ which maximize the objective function as given in~\eqref{p1}. 
However, $P_{d}^{k}$ and $P_{fa}^{k}$ are coupled and, therefore, cannot be optimized independently. Thus, we first analyze the problem of maximizing the KLD for a fixed $P_{fa}^{k}$. We assume that $P_{fa}^{k}=y_k$ and $P_{d}^{k}=y_k+x_k$. Next, we analyze the properties of KLD with respect to $x_k$, i.e., $(P_{d}^{k}-P_{fa}^{k})$ in the region where attacker cannot blind the FC, i.e., for $\sum_{j=1}^{k}\alpha_j<0.5$, in order to study the local detector design problem. 
Notice that, in the region $\sum_{j=1}^{k}\alpha_j \geq 0.5$, $D_k=0$ and optimizing over local detectors does not improve the performance. 

\begin{lemma}
\label{lemma3}
For a fixed $P_{fa}^{k}=y_k$, when $\sum_{j=1}^{k}\alpha_j < 0.5$, the KLD, $D$, as given in~\eqref{kld} is a monotonically increasing function of $x_k=(P_{d}^{k}-P_{fa}^{k})$. 
\end{lemma}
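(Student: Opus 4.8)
The plan is to exploit a decoupling in the objective: for a fixed $P_{fa}^{k}=y_k$, the variable $x_k=P_d^k-P_{fa}^k$ enters the total divergence $D=\sum_{k=1}^{K}N_k D_k$ only through the single level-$k$ term $D_k$, and within that term only through $\pi_{1,1}^k$. Indeed, from~\eqref{eq3} the quantity $\pi_{1,0}^k$ depends on $P_{fa}^k=y_k$ and on the attack parameters $\beta_{1,0}^k,\beta_{0,1}^k$, but not on $P_d^k$; likewise the divergence terms for every other level $k'\neq k$ involve only $\{P_d^{k'},P_{fa}^{k'}\}$ and the attack parameters, none of which depend on $P_d^k$. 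Consequently $\frac{dD}{dx_k}=N_k\,\frac{dD_k}{dx_k}$, and it suffices to establish $\frac{dD_k}{dx_k}>0$.

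First I would write $D_k$ as a function of $a:=\pi_{1,0}^k$ (held fixed) and $b:=\pi_{1,1}^k$ (the varying quantity), namely $D_k=(1-a)\log\frac{1-a}{1-b}+a\log\frac{a}{b}$, and differentiate with respect to $b$. A short calculation yields the clean expression
\[
\frac{dD_k}{db}=\frac{1-a}{1-b}-\frac{a}{b}=\frac{b-a}{b(1-b)}.
\]
Since we operate in the region $\sum_{j=1}^{k}\alpha_j<0.5$ with $P_d^k>P_{fa}^k$, the same argument invoked in the proof of Theorem~\ref{th3} gives $\pi_{1,1}^k>\pi_{1,0}^k$, i.e. $b>a$, while $0<b<1$; hence $\frac{dD_k}{db}>0$.

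Next I would compute the inner derivative $\frac{db}{dx_k}=\frac{d\pi_{1,1}^k}{dP_d^k}$ from~\eqref{eq4}, obtaining $\frac{db}{dx_k}=1-\beta_{1,0}^k-\beta_{0,1}^k=1-\sum_{j=1}^{k}\alpha_j\bigl(P_{1,0}^j+P_{0,1}^j\bigr)$. Because $P_{1,0}^j,P_{0,1}^j\in[0,1]$ and $\sum_{j=1}^{k}\alpha_j<0.5$, this is bounded below by $1-2\sum_{j=1}^{k}\alpha_j>0$, so $\pi_{1,1}^k$ is strictly increasing in $x_k$. Combining the two steps via the chain rule gives $\frac{dD_k}{dx_k}=\frac{dD_k}{db}\cdot\frac{db}{dx_k}>0$, and multiplying by $N_k$ establishes $\frac{dD}{dx_k}>0$, proving the claimed monotonicity.

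The main obstacle here is conceptual rather than computational: recognizing the decoupling that collapses the problem to the sign of a single-variable derivative. Once that observation is in place, the only delicate points are verifying $b>a$ and $\frac{db}{dx_k}>0$, and both follow cleanly from the region constraint $\sum_{j=1}^{k}\alpha_j<0.5$ (the second being precisely the factor $1-\beta_{1,0}^k-\beta_{0,1}^k$ that also controls the gap $\pi_{1,1}^k-\pi_{1,0}^k$), so no genuinely hard estimate is needed.
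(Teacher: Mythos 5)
Your proposal is correct and follows essentially the same route as the paper: the paper also reduces the claim to the sign of $\frac{\partial D}{\partial x_k}=N_k\,\pi_{1,1}^{k'}\left(\frac{1-\pi_{1,0}^k}{1-\pi_{1,1}^k}-\frac{\pi_{1,0}^k}{\pi_{1,1}^k}\right)$ with $\pi_{1,1}^{k'}=1-\beta_{0,1}^k-\beta_{1,0}^k$, which is exactly your chain-rule factorization $\frac{dD_k}{db}\cdot\frac{db}{dx_k}$, and then verifies positivity of both factors from $\pi_{1,1}^k>\pi_{1,0}^k$ and $\sum_{j=1}^{k}\alpha_j<0.5$. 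Your presentation merely makes the decoupling and the two factors explicit; no substantive difference.
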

\begin{proof}
 To prove this, we calculate the partial derivative of $D$ with respect to $x_k$. 
By substituting $P_{fa}^{k}=y_k$ and $P_{d}^{k}=y_k+x_k$ into~\eqref{kld}, the partial derivative of $D$ with respect to $x_k$ can be calculated as

\begin{eqnarray*}
&&
\dfrac{\partial D}{\partial x_k}=N_k\dfrac{\partial}{\partial x_k}\left[\pi_{1,0}^k\log\dfrac{\pi_{1,0}^k}{\pi_{1,1}^k}+(1-\pi_{1,0}^k)\log\dfrac{1-\pi_{1,0}^k}{1-\pi_{1,1}^k}\right]\\
&\Leftrightarrow&
\dfrac{\partial D}{\partial x_k}=N_k\pi_{1,1}^{k'}\left(\dfrac{1-\pi_{1,0}^k}{1-\pi_{1,1}^k}-\dfrac{\pi_{1,0}^k}{\pi_{1,1}^k}\right),
\end{eqnarray*}
where $\pi_{1,0}^k$ and $\pi_{1,1}^k$ are as given in \eqref{eq3} and \eqref{eq4}, respectively and $\pi_{1,1}^{k'}=(1-\beta_{0,1}^k-\beta_{1,0}^k)$. 
Notice that, 
\begin{eqnarray*}
&&
\left(\dfrac{1-\pi_{1,0}^k}{1-\pi_{1,1}^k}-\dfrac{\pi_{1,0}^k}{\pi_{1,1}^k}\right)>0\\
&\Leftrightarrow&
\pi_{1,1}^k>\pi_{1,0}^k.
\end{eqnarray*}
Thus, the condition to make $\dfrac{\partial D}{\partial x_k}>0$ simplifies to 
\begin{equation}
\label{c3}
\pi_{1,1}^{k'}>0 \Leftrightarrow 1>(
\beta_{0,1}^k+\beta_{1,0}^k)
\end{equation}
Substituting the values of $\beta_{1,0}^k$ and $\beta_{1,1}^k$,  the above condition can be written as:
\begin{eqnarray}
&&
\sum_{j=1}^{k}\alpha_j P_{1,0}^j+\sum_{j=1}^{k}\alpha_j P_{0,1}^j<1\\
&\Leftrightarrow&
\sum_{j=1}^{k}\alpha_j (P_{1,0}^j+P_{0,1}^j)<1
\end{eqnarray}
The above condition is true for any $0\leq P_{0,1}^j,P_{1,0}^j\leq 1$ when $\sum_{j=1}^{k}\alpha_j<0.5$. This completes the proof.
\end{proof}

Lemma~\ref{lemma3} suggests that one possible solution to maximize $D$ is to choose the largest possible $x_k$ constrained to $0 \leq x_k\leq 1-y_k$. The upper bound results from the fact that $\{P_{d}^{k},P_{fa}^{k}\}_{k=1}^{K}$ are probabilities and, thus, must be between zero and one. In other words, the solution is to maximize the probability of detection for a fixed value of probability of false alarm. In detection theory, it is well known that the likelihood ratio based test is optimum for this criterion. Thus, under the conditional independence assumption, likelihood ratio based test as given in~\eqref{LRT} is optimal for local nodes,
even in the presence of Byzantines, and the optimal operating points $\{P_{d}^{k*},P_{fa}^{k*}\}_{k=1}^{K}$ are independent of the Byzantines' parameters $\{\alpha_k\}_{k=1}^{K}$. 

To summarize, optimal local detectors for distributed detection in tree networks are likelihood ratio based detectors and are independent of the Byzantines' parameter $\{\alpha_k\}_{k=1}^{K}$. 
We further explore the problem from the network designer's (FC) perspective. In our previous analysis, we have assumed that the attack configuration $\{B_k\}_{k=1}^K$ is known and shown that the optimal local detector is independent of $\{\alpha_k\}_{k=1}^{K}$. However, notice that the KLD is the exponential decay rate of the error probability of the “optimal detector”. In other words, while optimizing over KLD, we implicitly assumed that the optimal detector, which is a likelihood ratio based detector, is used at the FC. Taking logarithm on both sides of \eqref{LRT}, the optimal decision rule simplifies to
\begin{equation}
\label{weight}
\sum_{k=1}^{K}[a_{1}^{k}s_k+a_{0}^{k}(N_k-s_k)]\quad \mathop{\stackrel{H_1}{\gtrless}}_{H_0} \quad \log\eta
\end{equation}
where the optimal weights are given by $a_{1}^{k}=\log\frac{\pi_{1,1}^k}{\pi_{1,0}^k}$ and $a_{0}^{k}=\log\frac{1-\pi_{1,1}^k}{1-\pi_{1,0}^k}$.
To implement the optimal detector, the FC needs to know the optimal weights $a_{j}^{k}$, which are functions of $\{\alpha_k\}_{k=1}^{K}$. 
In the next section, we are interested in answering the question: Is it possible for the FC to predict the attack configuration $\{B_k\}_{k=1}^K$ in the tree? The knowledge of this attack configuration can be used for
determining the optimal detector at the FC to improve the system performance. Notice that, learning/estimation based techniques can be used on data to determine the attack configuration. However, the FC has to acquire a large amount of data coming from the nodes over a long period of time to accurately estimate $\{B_k\}_{k=1}^K$. 

In the next section, we propose a novel technique to predict the attack configuration by considering the following scenario:
The FC, acting first, commits to a defensive strategy by deploying the defensive resources to protect the tree network, while the attacker chooses its best response or attack configuration after surveillance of this defensive strategy. Both, the FC and the Byzantines have to incur a cost to deploy the defensive resources and attack the nodes in the tree network, respectively.
We consider both the FC and the attacker to be strategic in nature and model the strategic interaction between them as a Leader-Follower (Stackelberg) game.
This formulation provides a framework for identifying attacker and defender (FC) equilibrium strategies, which can be used to implement the optimal detector. The main advantage of this technique is that the equilibrium strategies can be determined \textit{a priori} and, therefore, there is no need to observe a large amount of data coming from the nodes over a long period of time to accurately estimate $\{B_k\}_{k=1}^K$.

\section{Stackelberg Game for Attack Configuration Prediction Problems}
\label{sec5}
We model the strategic interaction between the FC and the attacker as a Leader-Follower (Stackelberg) game. 
We assume that the FC has to incur a cost for deploying the network and the Byzantine has to incur a cost\footnote{Due to variations in hardware complexity and the level of tamper-resistance present in nodes residing at different levels of the tree, the resources required to capture and tamper nodes at different levels  may be different and, therefore, nodes
have varying costs of being attacked.} for attacking the network.
It is assumed that the network designer or the FC has a cost budget $C_{budget}^{network}$ and the attacker has a cost budget $C_{budget}^{attacker}$\footnote{In this paper, we assume that the attacker budget $C_{budget}^{attacker}$ is such that $\sum\limits_{k=1}^{K}\alpha_k<0.5$, i.e., the attacker cannot make $D_k=0,\;\forall k$. Notice that, if the attacker can make $D_k=0$ for some $k=l$, then, it can also make $D_k=0,\;\forall k\geq l$. Also, $D_k=0$ implies that $\pi_{1,1}^k=\pi_{1,0}^k$ and, therefore, the weights $(a_1^k,a_0^k)$ in~\eqref{weight} are zero. In other words, the best the FC can do in the case when $D_k=0,\;\forall k\geq l$ is to ignore or discard the decisions of the nodes residing at level $k\geq l$. This scenario is equivalent to using the tree network with $(l-1)$ levels for distributed detection.}. 
More specifically, the FC wants to allocate the best subset of “defensive resources ” (denoted as  $\{\tilde{c}_k\}_{k=1}^K$)\footnote{Let $\tilde{c}_{k}$ denote the resources deployed or budget allocated by the FC to protect or deploy a node at level $k$.} from a set of available defensive resources $\mathbb{C}=(c_1,\cdots,c_n)$ (arranged in a descending order, i.e., $c_1\geq c_2\cdots\geq c_n$), where $n\geq K$, complying with its budget constraint $C_{budget}^{network}$ to different levels of the tree network. After the FC allocates the defensive resources or budget to different levels of the tree network, an attacker chooses an attack configuration, $\{B_k\}_{k=1}^K$ complying with his budget constraint $C_{budget}^{attacker}$ to maximally degrade the performance of the network.

Next, we formalize the Stackelberg game as a bi-level optimization problem.
For our problem, the upper level problem (ULP)
corresponds to the FC who is the leader of the game, while the lower level problem (LLP)
belongs to the attacker who is the follower.
\begin{equation}
\begin{split}
\underset{\{\tilde{c}_k\}_{k=1}^K\in\mathbb{C}}{\mathrm{maximize}}\quad& D(\{\tilde{c}_k\}_{k=1}^K) \\
\mbox{subject to}  \quad & \sum_{k=1}^{K}\tilde{c}_{k}N_k \leq C_{budget}^{network}\\
\, &  \underset{B_{k}\in \mathbb{Z}^{+}}{\text{minimize}} \quad D(\{B_k\}_{k=1}^K) \\
\, &  \text{subject to}\quad \sum_{k=1}^{K}\tilde{c}_{k}B_{k} \leq C_{budget}^{attacker} \\
\, & \qquad \qquad \quad 0 \leq B_{k} \leq N_k, \forall\, k = 1,2,\ldots,K
\end{split}
\end{equation}
where $\mathbb{Z}^{+}$ is the set of non-negative integers.
Notice that, the bi-level optimization problem, in general, is an NP-hard problem. In fact, the LLP is a variant of the packing formulation of the bounded knapsack problem with a non-linear objective function. This is, in general, NP-hard. Using existing algorithms, cost set $\{\tilde{c}_k\}_{k=1}^K$ and attack configuration $\{B_k\}_{k=1}^K$ can be determined at the cost of computational efficiency. In this paper, we identify a special case of the above problem which can be solved in polynomial time to determine the equilibrium strategies. 
Next, we discuss the relationships that
enable our problem to have a polynomial time solution.
We define profit $P(S)$ of an attack configuration $S=\{B_k\}_{k=1}^K$ as follows\footnote{In this section, we assume that the optimal operating point, i.e., $(P_{d}^{k*},P_{fa}^{k*})$, is the same for all the nodes in the tree network. It has been shown that the use of identical thresholds is asymptotically optimal for parallel networks~\cite{tsit}. We conjecture that this result is valid for tree networks as well and employ identical thresholds.} $$P(S)=D(\phi)-D(S)=D(\phi)-D(\{B_k\}_{k=1}^K),$$
where $D(\phi)$ is the KLD when there are no Byzantines in the network and $D(S)=D(\{B_k\}_{k=1}^K)$ is the KLD with $\{B_k\}_{k=1}^K$ Byzantines in the tree network. 
Next, we define the concept of dominance which will be used later to explore some useful properties of the optimal attack configuration $\{B_k\}_{k=1}^K$.

\begin{definition}
We say that a set $S_{1}$ dominates another set $S_{2}$ if
\begin{equation}
\label{no}
P(S_{1})\geq P(S_{2})\mbox{ and } C(S_{1})\leq C(S_{2}),
\end{equation}
where $P(S_{i})$ and $C(S_{i})$ denote the profit and cost incurred by using set $S_{i}$, respectively. If in~\eqref{no}, $P(S_{1})> P(S_{2})$, $S_1$ strictly dominates $S_2$ and if $P(S_{1})= P(S_{2})$, $S_1$ weakly dominates $S_2$.
\end{definition}

To solve the bi-level optimization problem, we first solve the LLP assuming the solution of the ULP to be some fixed $(\tilde{c}_1,\cdots,\tilde{c}_K)$. This approach will give us a structure of the optimal $\{B_k\}_{k=1}^K$ for any arbitrary $\{\tilde{c}_k\}_{k=1}^K$, which can later be utilized to solve the bi-level optimization problem.  
We refer to LLP as a maximum damage Byzantine attack problem. Observe that, knowing that the FC chooses $(\tilde{c}_1,\cdots,\tilde{c}_K)$, the LLP can be reformulated as follows: 
\begin{equation*}
\begin{aligned}
& \underset{\{B_{k}\}_{k=1}^{K}}{\text{minimize}}
& & \sum_{k=1}^{K} N_k D_k(\{B_{i}\}_{i=1}^{k}) \\
& \text{subject to}
& & \sum_{k=1}^{K}\tilde{c}_{k}B_{k} \leq C_{budget}^{attacker} \\
& & & 0 \leq B_{k} \leq N_{k}, \text{and integer}\; \forall k\\
\end{aligned}
\end{equation*}

Next, we discuss the relationships that
enable our maximum damage Byzantine attack problem to admit a polynomial time solution.

\subsection{Analysis of the Optimal Attack Configuration}

In this section, we identify a special case of the bounded knapsack problem (LLP) which can be solved in polynomial time. More specifically, we show that if the set of defensive resources $\mathbb{C}=(c_1,\cdots,c_n)$ satisfy the cost structure  $c_{max} \leq \left(\displaystyle\min_{k\in\{1,\cdots,K-1\}}\frac{N_{k+1}}{N_k}\right) \times c_{min}\footnote{Notice that, in the case of the perfect $M$-ary tree networks, the proposed cost structure simplifies to $c_{max} \leq M\times c_{min}$.}$ or $c_1\leq \displaystyle\min_k a_k\times c_n$, then, 
the optimal solution $\{B_k\}_{k=1}^K$ exhibits the properties given in the lemma below.

\begin{lemma}
\label{l5}
Given a $K$ level tree network with cost structure satisfying $c_{max} \leq \left(\displaystyle\min_{k\in\{1,\cdots,K-1\}}\frac{N_{k+1}}{N_k}\right) \times c_{min}$,
the best response of an attacker with cost budget $C_{budget}^{attacker}$ is $\{B_{k}\}_{k=1}^{K}$ with
\[  \begin{array}{lcl}
			B_{1}= \left \lfloor \frac{C_{budget}^{attacker}}{\tilde{c}_{1}}\right \rfloor \qquad \\
				\end{array}
\]
and the remaining elements of $B_k$  for $2 \leq k\leq K$ can be calculated recursively.
\end{lemma}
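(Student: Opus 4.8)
The plan is to collapse the follower's objective to a single scalar per level and then run a greedy/exchange argument whose cost-neutrality is exactly guaranteed by the stated cost structure. By Theorem~\ref{th3} the attacker's best flipping choice is $(P_{0,1}^k,P_{1,0}^k)=(1,1)$, so the LLP objective becomes $\sum_{k=1}^{K}N_k D_k^*(t_k)$ with $t_k=\sum_{i=1}^{k}\alpha_i=\sum_{i=1}^{k}B_i/N_i$, and by Lemma~\ref{equiv} each $D_k^*$ is continuous, decreasing and convex in $t_k$. The structural fact I would build on is that a single Byzantine at level $i$ raises $\alpha_i$ by $1/N_i$ and hence lifts every cumulative sum $t_k$ with $k\ge i$ by $1/N_i$; because $N_1\le N_j$ and $D_k^*$ is decreasing, Byzantines placed near the root (small $i$) move more of the $t_k$'s, and by larger increments, than Byzantines placed deeper, so intuitively the attacker should load the lowest levels first.

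Next I would turn this intuition into a strict dominance by exhibiting a clean integer exchange. Fix any level $j\ge 2$ and consider replacing $N_j/N_1$ Byzantines at level $j$ by a single Byzantine at level $1$ (note $N_j/N_1=\prod_{i=1}^{j-1}a_i$ is an integer). Removing those $N_j/N_1$ deep nodes lowers $\alpha_j$, and hence each $t_k$ with $k\ge j$, by exactly $1/N_1$, while the added root node raises every $t_k$ by $1/N_1$; the net effect is that $t_k$ is unchanged for $k\ge j$ and strictly increased for $k<j$. Since $D_k^*$ is decreasing, this strictly increases the profit $P(S)=D(\phi)-D(S)$. The cost changes by $\tilde c_1-\tilde c_j N_j/N_1$, which is nonpositive precisely when $N_j/N_1\ge \tilde c_1/\tilde c_j$. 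Here the hypothesis closes the argument: $\tilde c_1/\tilde c_j\le c_{max}/c_{min}\le \min_k a_k\le \prod_{i=1}^{j-1}a_i=N_j/N_1$, so the exchange never increases cost and the new configuration strictly dominates the old one.

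Applying this exchange (and the analogous one with level $1$ replaced by any base level $m$, which shows level $m$ dominates all deeper levels whenever $N_j/N_m\ge\tilde c_m/\tilde c_j$, again implied by the cost structure) drives any candidate solution toward concentrating its mass at the lowest levels. I would conclude that an optimal configuration spends as much of the budget as possible at level $1$, giving $B_1=\lfloor C_{budget}^{attacker}/\tilde c_1\rfloor$ --- the standing assumption $\sum_k\alpha_k<0.5$ keeps this below $N_1$ so the box constraint $B_1\le N_1$ is inactive --- and then, since the residual $R_1=C_{budget}^{attacker}-\tilde c_1 B_1<\tilde c_1$ can no longer fund a root node, repeats the same reasoning on the remaining levels to obtain $B_k=\lfloor R_{k-1}/\tilde c_k\rfloor$ with $R_{k-1}=C_{budget}^{attacker}-\sum_{i<k}\tilde c_i B_i$. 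This is exactly the asserted recursion.

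I expect the main obstacle to be establishing global (not merely local) optimality of this recursive fill in the presence of integrality together with the multi-level cascade, since a single deep Byzantine simultaneously perturbs all of $t_j,\ldots,t_K$ and a naive one-for-one swap can fail to be affordable when $\tilde c_j<\tilde c_1$. The reason the integer exchange above is the right device is that it is simultaneously $t$-preserving on the untouched levels and cost-non-increasing, so it cannot be defeated by the diminishing returns coming from the convexity of $D_k^*$. Turning the exchange into a full optimality proof requires an induction over levels that uses the dominance relation to discard dominated partial configurations; this same pruning is what collapses the bounded-knapsack LLP to the polynomial-time recursion stated in the lemma.
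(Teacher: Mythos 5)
Your proposal is correct and follows essentially the same route as the paper: reduce the follower's objective to the cumulative fractions $t_k=\sum_{i=1}^{k}B_i/N_i$ via Theorem~\ref{th3} and Lemma~\ref{equiv}, exhibit a coverage-preserving exchange whose cost-neutrality is exactly the stated cost structure to show that attacking ancestor levels strictly dominates, and conclude the greedy recursive fill. The only real difference is cosmetic --- the paper trades $\delta$ nodes at level $j$ against $\delta N_{j+1}/N_j$ nodes at the adjacent level $j+1$ (plus the partial case $\gamma\le N_{j+1}/N_j$), whereas you compose these into a direct exchange with the base level --- and your argument inherits the same unresolved corner case as the paper's, namely an over-compensated swap that buys more than $N_{j+1}/N_j$ children per removed parent using residual budget.
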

\begin{proof}
To prove Lemma~\ref{l5}, it is sufficient to show that:
\begin{enumerate}
\item KLD is a monotonically decreasing function of $B_k$, and,
\item Attacking parent nodes is a strictly dominant strategy.
\end{enumerate}

Lemma~\ref{equiv} suggests that the KLD is a monotonically decreasing function of $B_k$ in the region where attacker cannot make $D_k=0$ and, therefore, (1) is proved.
Next, we show that attacking parent nodes is a strictly dominant strategy. In other words, given a cost budget $C_{budget}^{attacker}$, it is more profitable for an attacker to attack the parent nodes.  
Observe that the KLD at level $k$ is a function of Byzantines' parameter $(B_1,\cdots,B_k)$. Thus, we denote it as $D_k(B_1,\cdots,B_k)$. 

In order to prove that attacking parent nodes is a strictly dominant strategy, it is sufficient to show that the attack configuration $S_1=(B_1,\cdots,B_j,B_{j+1},\cdots,B_K)$ strictly dominates the attack configuration $S_2=(B_1,\cdots,B_j-\delta,B_{j+1}+\delta\frac{N_{j+1}}{N_j},\cdots,B_K)$ for $\delta \in \{1,\cdots,B_j\}$. In other words, we want to show that $P(S_1)>P(S_2)$ and $C(S_1)\leq C(S_2)$. From the cost inequality it follows that $C(S_1)\leq C(S_2)$ because $c_{max} \leq (\displaystyle\min_k {N_{k+1}}/{N_k}) \times c_{min}\Rightarrow \tilde{c}_{j} \leq (\displaystyle {N_{j+1}}/{N_j}) \times \tilde{c}_{j+1}$.  
Also, note that if the attack configuration $S_1$ strictly dominates the attack configuration $S_2$, then, it will also strictly dominate any attack configuration $\tilde{S}_2$ with $\tilde{S}_2=(B_1,\cdots,B_j-\delta,B_{j+1}+\delta\gamma,\cdots,B_K)$, where $\gamma\leq \frac{N_{j+1}}{N_j}$.
Next, we show that $P(S_1)>P(S_2)$.

Since $D_j(B_1,\cdots,B_{j-1},B_j)<D_j(B_1,\cdots,B_{j-1},B_j-\delta)$, for $\delta\in\{1,\cdots,B_j\},\;\forall j$, it follows that
\begin{small}
\begin{eqnarray*}
&&
D_j(B_1,\cdots,B_{j-1},B_j)<D_j(B_1,\cdots,B_{j-1},B_j-\delta)\\
&\Leftrightarrow&
\sum_{k=1}^{j}D_k(B_1,\cdots,B_k)<\sum_{k=1}^{j-1}D_k(B_1,\cdots,B_k)+D_j(B_1,\cdots,B_{j-1},B_j-\delta)\\
&\Leftrightarrow&
\sum_{k=1}^{K}D_k(B_1,\cdots,B_k)<\sum_{k=1}^{j-1}D_k(B_1,\cdots,B_k)+D_j(B_1,\cdots,B_{j-1},B_j-\delta)\\
&&\qquad\qquad\qquad\qquad\qquad+\sum_{k=j+1}^{K}D_k(B_1,\cdots,B_j-\delta,B_{j+1}+\delta\frac{N_{j+1}}{N_j},B_{j+2},\cdots,B_k),
\end{eqnarray*}
\end{small}
where the last inequality follows from the fact that $\frac{B_j}{N_j}+\frac{B_{j+1}}{N_{j+1}}=\frac{B_j-\delta}{N_j}+\frac{B_{j+1}+\frac{N_{j+1}}{N_j}\delta}{N_{j+1}}$ and, therefore,
$$D_k(B_1,\cdots,B_j,B_{j+1},\cdots,B_k)=D_k(B_1,\cdots,B_j-\delta,B_{j+1}+\frac{N_{j+1}}{N_j}\delta,\cdots,B_k).$$

This implies that the set $S_1$ strictly dominates the set $S_2$.
From the results in Lemma~\ref{equiv}, it is seen that the profit is an increasing function of the attack nodes. Lemma~\ref{equiv} in conjunction with the fact that attacking parent nodes is a strictly dominant strategy implies Lemma~\ref{l5}.
\end{proof}
It can also be shown that the solution $\{B_k\}_{k=1}^{K}$ will be non-overlapping and unique under the condition that the attacker cannot make $D_k=0,\;\forall k$.
\subsection{Bi-Level Optimization Algorithm}
\label{case3}

Based on Lemma~\ref{l5}, in this section we will present a polynomial time algorithm to solve the bi-level optimization problem, i.e., to find $\{\tilde{c_{k}}\}_{k=1}^{K}$ and $\{B_{k}\}_{k=1}^{K}$.
Using the cost structure $c_{max} \leq \left(\displaystyle\min_k\frac{N_{k+1}}{N_k}\right) \times c_{min}$, the attack configuration $\{B_k\}_{k=1}^K$ as given in Lemma~\ref{l5} can be determined in a computationally efficient manner. 
Due to structure of the optimal $\{B_k\}_{k=1}^K$, the bi-level optimization problem simplifies to finding the solution $\{\tilde{c_k}\}_{k=1}^K$ of the ULP.

To solve this problem, we use an iterative elimination approach. We start by listing all $n \choose K$ combinations from the set $\mathbb{C}$, denoted as, $S=\{s_i\}_{i=1}^{n \choose K}$. Without loss of generality, we assume that the elements of $s_i=\{c_1^i,\cdots,c_K^i\}$ are arranged in  descending order, i.e., $c_k^i\geq c_{k+1}^i,\forall k$. 
Notice that, all these $n \choose K$ combinations will satisfy $c_k^i\leq \frac{N_{k+1}}{N_k} c_{k+1}^i$, because $$c_k^i\leq c_{max}\leq \min_j\frac{N_{j+1}}{N_j}c_{min}\leq \min_j\frac{N_{j+1}}{N_j}c_{k+1}^i\leq \frac{N_{k+1}}{N_k}c_{k+1}^i.$$
Next, we discard all those subsets $s_i$ from $S$ which violate the network designer's cost budget constraint. If the set $S$ is empty, then there does not exist any solution for the ULP. Otherwise, the problem reduces to finding the subset $s_i$ which maximizes the KLD. To find the subset $s_i$ which maximizes the KLD, using the dominance relationship we start with assigning the cost $\tilde{c_1}=\displaystyle\min_{k\in s} c_1^{k}$, where $s$ has the elements which are solutions of $\displaystyle\arg\min_i\left \lfloor \frac{C_{budget}^{attacker}}{c_1^{i}}\right \rfloor$. Next, we discard all those subsets $s_i$ from $S$ which do not have $\tilde{c_1}$ as their first element and solve the problem recursively.
 
Pseudo code of the polynomial time algorithm to find $\{\tilde{c_{k}}\}_{k=1}^{K}$ and $\{B_{k}\}_{k=1}^{K}$
is presented as Algorithm~\ref{findnumberofByzantine1}.

\begin{algorithm} [] \setstretch{1.75}
\caption{Bi-Level Optimization Algorithm }          
\label{findnumberofByzantine1}                           
\begin{algorithmic} [1] 
\REQUIRE{$\mathbb{C}=\{c_k\}_{k=1}^n$ with $c_{max} \leq \left(\displaystyle\min_j\frac{N_{j+1}}{N_j}\right) \times c_{min}$}
\STATE $S \leftarrow \text{All $K$ out of $n$ combinations} \;\{s_i\}_{i=1}^{n \choose K}$ \text{with elements of $s_i$ arranged in decreasing order}
\FOR{$i=1$ \TO ${n \choose K}$}
\IF {$\sum\limits_{k=1}^{K}c_k^i\times N_k > C_{budget}^{network}$}
\STATE $S \leftarrow S/s_i$ 
\ENDIF
\ENDFOR
\IF {$S$ is an empty set}
\RETURN $(\phi,\;\phi)$
\ELSE
\FOR{$k=1$ \TO $K$}
\STATE $\tilde{c}_k=\displaystyle\min_{j\in s} c_k^j$ \text{where} $s$ \text{has elements which are solutions of} $\displaystyle\arg\min_i\left \lfloor \frac{C_{budget}^{attacker}}{c_k^{i}}\right \rfloor$
\STATE $B_{k} \leftarrow \left \lfloor \frac{C_{budget}^{attacker}}{\tilde{c_k}}\right \rfloor$
\STATE $C_{budget}^{attacker} \leftarrow (C_{budget}^{attacker}-\tilde{c}_k B_k)$
\ENDFOR
\RETURN $(\{\tilde{c}_k\}_{k=1}^K,\;\{B_k\}_{k=1}^K)$
\ENDIF
\end{algorithmic}
\end{algorithm}

\subsection{An Illustrative Example}
\label{example}
 Let us consider a two-level network with $N_{1}=6$ and $N_{2}=12$. 
We assume that $\mathbb{C}=\{4,\;3,\;2\}$, $C_{budget}^{network}=60$ and $C_{budget}^{attacker}=11$. Next, we solve the bi-level optimization problem. 
Observe that, costs satisfy ${c}_{1} \leq 2\times {c}_{3}$.
So the algorithm chooses the solution of the ULP as ($\tilde{c}_{1} = 4$, $\tilde{c}_{2} = 3$) and the solution of the LLP as
($B_{1}=\left \lfloor \frac{11}{4}\right \rfloor =2$, $B_{2}=\left \lfloor\frac{11-2\times 4}{3}\right \rfloor =1$). 
To corroborate these result, in Figure~\ref{feas}, we plot the $\underset{P_{1,0},P_{0,1}}{\text{min}}$ KLD for all combinations of the parameters $B_1$ and $B_2$ in the tree. We vary the parameter $B_1$ from $0$ to $6$ and $B_2$ from $0$ to $12$. 
All the feasible solutions are plotted in red and unfeasible solutions are plotted in blue. 
Figure~\ref{feas} corroborates the results of our algorithm.
\begin{figure}[t]
  \centering
    \includegraphics[height=2.5in, width=!]{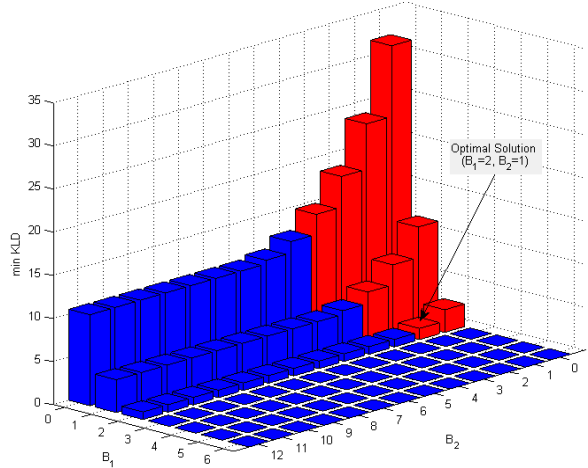}
    \vspace*{-0.1in}
    \caption{min KLD vs. attack configuration $(B_1,B_2)$ for $P_d = 0.9$, $P_{fa} = 0.1$.}\label{feas}
  \vspace*{-0.15in}
\end{figure}

Notice that, the attack configuration $\{B_k\}_{k=1}^{K}$ is the set containing
the \textit{number} of Byzantines residing at different levels of the tree. However, the FC cannot identify the Byzantines in the network. Also, notice that when the adversary attacks more than $50\%$ of nodes at level $1$, the decision fusion scheme becomes completely incapable. In these scenarios, where the FC is blind, the knowledge of attack configuration will not incur any performance benefit.
Next, we present a reputation-based Byzantine identification/mitigation scheme, which works even when the network is blind, in order to improve the detection performance of the network. We propose a simple yet efficient Byzantine identification scheme and analyze its performance.

\section{An Efficient Byzantine Identification Scheme}
\label{performance}
In this section, we propose and analyze a Byzantine identification scheme to be implemented at the FC. 

\subsection{Byzantine Identification Scheme}
We assume that the FC has the knowledge of the attack model and utilizes this knowledge to identify the Byzantines. The FC observes the local decisions of each node over a time window $T$, which can be denoted  by $(k,i)=[u_1(k,i),\ldots, u_T(k,i)]$ for $1 \leq i \leq N_k$ at level $1 \leq k \leq K$. 
We also assume that there is one honest anchor node with probability of detection $P_d^A$ and probability of false alarm $P_{fa}^A$ present and known to the FC. We employ the anchor node to provide the gold standard which is used to detect whether or not other nodes are Byzantines. The FC can also serve as an anchor node when it can directly observe the phenomenon and make a decision. We denote the Hamming distance between reports of the anchor node and an honest node $i$ at level $k$ over the time window $T$ by $d_{H}^{A}(k,i)=||U^A-U^H(k,i)||$, that is the number of elements that are different between  $U^A$ and $U^H(k,i)$. Similarly, the Hamming distance between reports of the anchor node and a Byzantine node $i$ at level $k$ over the time window $T$ is denoted by $d_{B}^{A}(k,i)=||U^A-U^B(k,i)||$.
Since the FC is aware of the fact
that Byzantines might be present in the network, it compares the Hamming distance of a node $i$ at level $k$ 
to a threshold $\eta_k$, $\forall i, \forall k$ (a procedure to calculate $\eta_k$ is discussed later in the paper), to make a decision to identify the Byzantines. In tree networks, a Byzantine node
alters its decision as well as received decisions
from its children prior to transmission in order to undermine the network performance. Therefore, solely based on the observed data of a node $i$ at level $k$, the FC cannot determine whether the data has been flipped by the node $i$ itself or by one of its Byzantine parent node. In our scheme, the FC makes the inference about a node being Byzantine by analyzing the data from the node $i$ as well as its predecessor nodes' data. FC starts from the nodes at level $1$ and computes the Hamming distance between reports of the anchor node and the nodes at level $1$. FC declares node $i$ at level $1$ to be a Byzantine if and only if the Hamming distance of node $i$ is greater than a fixed threshold $\eta_1$. Children of identified Byzantine nodes $\mathbb{C}(\mathbb{B}_1)$ are not tested further because of the non-overlapping condition. However, if a level $1$ node is determined not to be a Byzantine, then, the FC tests its children nodes at level $2$. The FC declares node $i$ at level $k$, for $2\leq k\leq K$, to be a Byzantine if and only if the Hamming distance of node $i$ is greater than a fixed threshold $\eta_k$ and Hamming distances of all predecessors of node $i$ is less than equal to their respective thresholds $\eta_j$. 

In this way, it is possible to counter the data falsification attack by isolating Byzantine nodes from the information fusion process.
The probability that a Byzantine node $i$ at level $k$ is isolated at the end of the time window $T$, is denoted as $P_{B}^{iso}(k,i)$. 

\subsection{Performance Analysis}
 
As aforementioned, local decisions of the nodes are compared to the decisions of the anchor node over a time window of length $T$. The probability that an \textit{honest} node $i$ at level $k$ makes a decision that is different from the anchor node is given by
\begin{eqnarray*}
&&P_{diff}^{AH}(k,i)\\
&=& P(u_i^A=1,u_{k,i}^H=0,H_0)+P(u_i^A=0,u_{k,i}^H=1,H_0)\\
&&+P(u_i^A=1,u_{k,i}^H=0,H_1)+P(u_i^A=0,u_{k,i}^H=1,H_1)\\
&=&P_0[(P_{fa}^k+P_{fa}^A)-2P_{fa}^k P_{fa}^A]+P_1[(P_{d}^k+P_d^A)-2P_{d}^k P_d^A]\\
&\doteq& P_0[P_{diff}^{AH}(k,i,0)]+P_1[P_{diff}^{AH}(k,i,1)]~.
\end{eqnarray*}
where the prior probabilities of the two hypotheses $H_0$ and $H_1$ are
denoted by $P_0$ and $P_1$, respectively.
The probability that a \text{Byzantine} node $i$ at level $k$ sends a decision different from that of the anchor node is given by
\begin{eqnarray*}
&&P_{diff}^{AB}(k,i)\\
&=& P(u_{i}^A=1,u_{k,i}^B=0,H_0)+P(u_i^A=0,u_{k,i}^B=1,H_0)\\
 &&+P(u_i^A=1,u_{k,i}^B=0,H_1)+P(u_i^A=0,u_{k,i}^B=1,H_1)\\
&=&P_0[P_{fa}^AP_{fa}^k +(1-P_{fa}^A)(1-P_{fa}^k)]+P_1[P_d^AP_d^k+(1-P_d^A)(1-P_d^k)]\\
&\doteq& P_0[P_{diff}^{AB}(k,i,0)]+P_1[P_{diff}^{AB}(k,i,1)]~.
\end{eqnarray*}

The difference between the reports of a node and the anchor node under hypothesis $l\in\{0,1\}$ (i.e., $d_{I}^{A}(k,i,l),\;I\in\{H,B\}$) is a Bernoulli random variable with mean $P_{diff}^{AH}(k,i,l)$ for honest nodes and $P_{diff}^{AB}(k,i,l)$ for Byzantines. FC declares node $i$ at level $k$ to be a Byzantine if and only if the Hamming distance of node $i$ is greater than a fixed threshold $\eta_k$ and Hamming distances of all predecessors of node $i$ are less than equal to their respective thresholds $\eta_j$.
The probability that a Byzantine node $i$ at level $k$ is isolated at the end of the time window $T$ can be expressed as

{\small \begin{eqnarray*}
&&
P_B^{iso}(k,i)=P[(d_{B}^{A}(k,i)>\eta_k),(d_{H}^{A}(k-1,i)\leq\eta_{k-1}),\cdots,(d_{H}^{A}(1,i)\leq\eta_{1})]\\
&=&\displaystyle \sum_{l\in\{0,1\}} P_l\left[
P[d_{B}^{A}(k,i,l)>\eta_k]\prod_{m=1}^{k-1}P[d_{H}^{A}(m,i,l)\leq\eta_{m}]\right]\\
&=& \displaystyle \sum_{l\in\{0,1\}} P_l\left[
\sum_{j=\eta_k+1}^{T} \binom {T} {j}(P_{diff}^{AB}(k,i,l))^j(1-P_{diff}^{AB}(k,i,l))^{T-j}\prod_{m=1}^{k-1}\left[\sum_{j=0}^{\eta_m} \binom {T} {j}(P_{diff}^{AH}(m,i,l))^j(1-P_{diff}^{AH}(m,i,l))^{T-j}\right]\right].
\end{eqnarray*}}
For large $T$, by using the normal approximation, we get
\begin{small}
\begin{equation*}
P_B^{iso}(k,i)=\displaystyle \sum_{l\in\{0,1\}} P_l\left[ Q\left(\frac{\eta_k-TP_{diff}^{AB}(k,i,l)}{\sqrt{(TP_{diff}^{AB}(k,i,l)(1-P_{diff}^{AB}(k,i,l)))}} \right)\prod_{m=1}^{k-1}Q\left(\frac{TP_{diff}^{AH}(m,i,l)-\eta_m}{\sqrt{(TP_{diff}^{AH}(m,i,l)(1-P_{diff}^{AH}(m,i,l)))}} \right)\right].
\end{equation*}
\end{small}
This can be written recursively as follows
\begin{equation}
\label{recur}
P_B^{iso}(k+1,i)=\displaystyle \sum_{l\in\{0,1\}} P_l\left[(1-b(k,l))\left(\frac{a(k+1,l)}{a(k,l)}\right)P_B^{iso}(k,i,l)\right],
\end{equation}
with $P_B^{iso}(k,i)\doteq\displaystyle \sum_{l\in\{0,1\}} P_l [P_B^{iso}(k,i,l)]$, and
$$a(k,l)=Q\left(\dfrac{\eta_{k}-TP_{diff}^{AB}(k,i,l)}{\sqrt{(TP_{diff}^{AB}(k,i,l)(1-P_{diff}^{AB}(k,i,l)))}} \right),$$
$$b(k,l)=Q\left(\dfrac{\eta_{k}-TP_{diff}^{AH}(k,i,l)}{\sqrt{(TP_{diff}^{AH}(k,i,l)(1-P_{diff}^{AH}(k,i,l)))}} \right).$$
One can choose $\eta_k$ such that the isolation probability of honest nodes at level $k$ based solely on its data under the hypothesis $H_l$ (i.e., $b(k,l)$) is constrained to some value $\delta_k<<0.5$. In other words, we choose $\eta_k$ such that $\displaystyle \max_{l\in\{0,1\}} b(k,l)=\delta_k$, i.e., 
\begin{small}
\begin{equation}
\label{thresholdp}
\eta_{k}=Q^{-1}(\delta_k){\sqrt{TP_{diff}^{AH}(k,i,l^*)(1-P_{diff}^{AH}(k,i,l^*))}} + TP_{diff}^{AH}(k,i,l^*)
\end{equation}
\end{small}
where $l^*=\displaystyle\arg\max_l b(k,l)$. Now, the expression for $a(k,l)$ can be written as
\begin{small}
\begin{equation*}
a(k,l)=Q\left(\dfrac{Q^{-1}(\delta_k){\sqrt{P_{diff}^{AH}(k,i,l^*)(1-P_{diff}^{AH}(k,i,l^*))}} + \sqrt{T}(P_{diff}^{AH}(k,i,l^*)-P_{diff}^{AB}(k,i,l))}{\sqrt{P_{diff}^{AB}(k,i,l)(1-P_{diff}^{AB}(k,i,l))}} \right)
\end{equation*}
\end{small}
Now using the fact that $\displaystyle \max_l P_{diff}^{AH}(k,i,l)<\displaystyle \min_l P_{diff}^{AB}(k,i,l)$, it can be shown that $(P_{diff}^{AH}(k,i,l^*)-P_{diff}^{AB}(k,i,l))<0,\;\forall i$ and, therefore, $\lim\limits_{T\rightarrow \infty} a(k,l)=1$.

\begin{lemma}
For a $K$ level tree network, for our proposed Byzantine identification scheme, the asymptotic (i.e., $T\rightarrow \infty$) probability that a Byzantine node $i$ at level $k+1$, for $1\leq k\leq K-1$, is isolated is lower-bounded by, $$\prod_{j=2}^{k}(1-\delta_j).$$ 
\end{lemma}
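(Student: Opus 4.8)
The plan is to turn the recursion~\eqref{recur} into an explicit product and then control each factor as $T\to\infty$. First I would pin down the base case of the recursion: a Byzantine residing at level $1$ has no predecessors, so it is declared Byzantine exactly when its own Hamming distance exceeds $\eta_1$, giving the conditional isolation probability $P_B^{iso}(1,i,l)=a(1,l)$. Substituting this into~\eqref{recur} and telescoping the ratios $a(k+1,l)/a(k,l)$, the conditional isolation probability collapses to one Byzantine-detection factor multiplied by one honest-survival factor per predecessor level,
\begin{equation*}
P_B^{iso}(k+1,i,l)=a(k+1,l)\prod_{m}\bigl(1-b(m,l)\bigr),
\end{equation*}
after which $P_B^{iso}(k+1,i)=\sum_{l\in\{0,1\}}P_l\,P_B^{iso}(k+1,i,l)$ recovers the unconditional probability.

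I would then bound the two kinds of factors separately. For the detection factor I would invoke the fact established immediately before the statement that $\lim_{T\to\infty}a(k+1,l)=1$; this rests on the separation $\max_l P_{diff}^{AH}(k,i,l)<\min_l P_{diff}^{AB}(k,i,l)$, which drives the argument of the $Q$-function to $-\infty$. For each honest-survival factor I would use the threshold rule~\eqref{thresholdp}, which is constructed precisely so that $\max_{l}b(m,l)=\delta_m$. Hence $b(m,l)\le\delta_m$ for both hypotheses and for every finite $T$, so that $1-b(m,l)\ge 1-\delta_m$ uniformly in $T$ and in $l$.

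Finally I would pass to the limit factor by factor --- legitimate because each term is a finite product of bounded, convergent quantities --- to obtain $\lim_{T\to\infty}P_B^{iso}(k+1,i,l)\ge\prod_{j}(1-\delta_j)$ for each $l$, and then average over $l$ using $\sum_{l}P_l=1$ to eliminate the dependence on the true hypothesis and reach the claimed bound. The main obstacle I anticipate is the bookkeeping inside the limit: the equality $b(m,l)=\delta_m$ is attained only at the maximizing hypothesis $l^\ast$, whereas at the other hypothesis $b(m,l)$ actually decays to $0$, so the tightest situation for the lower bound is the one in which every level's threshold is simultaneously tight under a common hypothesis. The uniform per-factor inequality $1-b(m,l)\ge 1-\delta_m$ sidesteps this difficulty by supplying the same bound in all cases, so the product lower bound survives the limit regardless of how the maximizing hypothesis varies across levels. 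I would also double-check the index range of the surviving product against the base case to make sure the count of honest-survival factors matches the number of predecessor levels of a node at depth $k+1$.
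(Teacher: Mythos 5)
Your proposal is correct and takes essentially the same route as the paper: the paper applies the recursion \eqref{recur} one level at a time, uses $\lim_{T\to\infty}a(k,l)=1$ and $1-b(m,l)\ge 1-\delta_m$ (the latter uniform in $T$ and $l$ by the choice of $\eta_m$ in \eqref{thresholdp}), and inducts, which is exactly your explicit telescoping $P_B^{iso}(k+1,i,l)=a(k+1,l)\prod_{m=1}^{k}\bigl(1-b(m,l)\bigr)$ followed by a factor-by-factor limit and averaging over $l$. Your closing concern about index bookkeeping is well founded: a node at level $k+1$ has predecessors at levels $1,\dots,k$, so the argument (yours and the paper's alike) actually delivers the lower bound $\prod_{j=1}^{k}(1-\delta_j)$, and the product starting at $j=2$ in the lemma statement and in the last line of the paper's proof appears to be an off-by-one slip rather than a stronger fact either derivation establishes.
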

\begin{proof}
Notice that, $\lim\limits_{T\rightarrow \infty} a(k,l)=1$. The asymptotic performance of the proposed scheme can be analyzed as follows:
\begin{small}
\begin{eqnarray*}
\lim\limits_{T\rightarrow \infty}P_B^{iso}(k+1,i)&=& \displaystyle \sum_{l\in\{0,1\}} P_l \lim\limits_{T\rightarrow \infty} \left[ (1-b(k,l))  \left(\frac{a(k+1,l)}{a(k,l)}\right)P_B^{iso}(k,i,l)\right]\\
&\geq& (1-\delta_k)\displaystyle \sum_{l\in\{0,1\}} P_l \lim\limits_{T\rightarrow \infty} \left[P_B^{iso}(k,i,l)\right]\\
&=& \prod_{j=2}^{k}(1-\delta_j).
\end{eqnarray*}
\end{small}
\end{proof}
Notice that, the parallel network topology is a special case of the tree network topology with $K=1$. For $K=1$, our scheme can identify all the Byzantines with probability one because $\lim\limits_{T\rightarrow \infty}P_B^{iso}(1,i)=\lim\limits_{T\rightarrow \infty}\displaystyle \sum_{l\in\{0,1\}} P_l [a(1,l)]=1$. 
When $K>1$, we can choose $\eta_k$ appropriately such that Byzantines can be identified with a high probability. 

Next, to gain insights into the solution, we present some numerical results in Figure~\ref{pbiso} that corroborate our theoretical results. We consider a tree network with $K=5$ and plot $P_B^{iso}(k,i)$, $1\leq k\leq 5$, as a function of the time window $T$. 
We assume that the operating points $(P_d^k,P_{fa}^k)$, $1\leq k\leq 5$, for the nodes at different levels are given by $[(0.8,0.1),(0.75,0.1),(0.6,0.1),(0.65,0.1),(0.6,0.1)]$ and for anchor node $(P_d^A,P_{fa}^A)=(0.9,0.1)$. We also assume that the hypotheses are equi-probable, i.e., $P_0=P_1=0.5$, and the maximum isolation probability of honest nodes at level $k$ based solely on its data is constrained by $\delta_k=0.01,\forall k$. It can be seen from Figure~~\ref{pbiso} that in a span of only $T=25$ time windows, our proposed scheme isolates/identifies almost all the Byzantines in the tree network. 
\begin{figure}[t]
  \centering
    \includegraphics[height=2.5in, width=!]{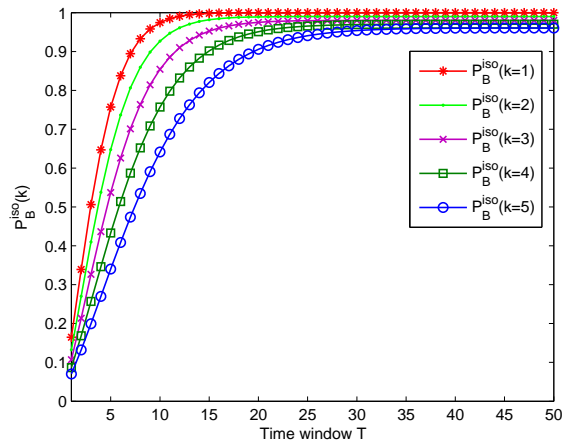}
    \vspace*{-0.1in}
    \caption{Isolation probability $P_B^{iso}(k,i)$ vs. time window $T$.} \label{pbiso}
  \vspace*{-0.15in}
\end{figure}
\section{Conclusion}
\label{sec6}
In this paper, we considered the problem of optimal Byzantine attacks on distributed detection mechanism in tree networks.
We analyzed the performance limit of detection
performance with Byzantines and obtained the optimal attacking strategies that minimize the detection error exponent. The problem was also studied from the network designer's perspective. It was shown that the optimal local detector is independent of the Byzantine's parameter. Next, we modeled
the strategic interaction between the FC and the attacker as a Leader-Follower (Stackelberg) game and attacker and defender (FC) equilibrium strategies were identified.
We also proposed a simple yet efficient scheme to identify Byzantines and analytically evaluated its performance. There are still many interesting questions that remain to be explored in the future work such
as analysis of the problem for arbitrary network topologies. The case where Byzantines collude in several groups (collaborate) to degrade
the detection performance can also be investigated.

\appendices
\section{Proof of Lemma~\ref{lem}}
\label{ap1}
To prove the lemma, we first show that any positive deviation $\epsilon\in(0,p]$ in flipping probabilities $(P_{1,0}^{k}, P_{0,1}^{k})=(p,p-\epsilon)$ will result in an increase in $D_k$. After plugging in $(P_{1,0}^{k}, P_{0,1}^{k})=(p,p-\epsilon)$ in \eqref{eq3} and \eqref{eq4}, we get
\begin{eqnarray}
\pi_{1,0}^k&=&[\beta_{1,0}^{k-1}(1-P_{fa}^{k})+(1-\beta_{0,1}^{k-1})P_{fa}^{k}]+[\alpha_k(p-P_{fa}^k(2p-\epsilon))+P_{fa}^k]\label{eeq2}\\
\pi_{1,1}^k&=&[\beta_{1,0}^{k-1}(1-P_d^{k})+(1-\beta_{0,1}^{k-1})P_d^{k}]+[\alpha_k(p-P_d^k(2p-\epsilon))+P_d^k].\label{eeq1}
\end{eqnarray}
Now we show that $D_k$ is a monotonically increasing function of the parameter $\epsilon$ or in other words, $\dfrac{dD_k}{d\epsilon}>0$.
\begin{eqnarray}
\dfrac{dD_k}{d\epsilon}&=&
\pi_{1,0}^k\left(\dfrac{\pi_{1,0}^{k'}}{\pi_{1,0}^k}-\dfrac{\pi_{1,1}^{k'}}{\pi_{1,1}^k}\right)+\pi_{1,0}^{k'} \log \dfrac{\pi_{1,0}^k}{\pi_{1,1}^k}\nonumber\\
&+&(1-\pi_{1,0}^k)\left(\dfrac{\pi_{1,1}^{k'}}{1-\pi_{1,1}^k}-\dfrac{\pi_{1,0}^{k'}}{1-\pi_{1,0}^k}\right)
-\pi_{1,0}^{k'}\log \dfrac{1-\pi_{1,0}^k}{1-\pi_{1,1}^k}\label{con1}
\end{eqnarray}
where $\dfrac{d\pi_{1,1}^k}{d\epsilon}=\pi_{1,1}^{k'}=\alpha_k P_d^k$ and $\dfrac{d\pi_{1,0}^k}{d\epsilon}=\pi_{1,0}^{k'}=\alpha_k P_{fa}^k$. After rearranging the terms in the above equation, the condition $\dfrac{dD_k}{d\epsilon}>0$ becomes 
\begin{eqnarray}
\dfrac{1-\pi_{1,0}^k}{1-\pi_{1,1}^k}+\dfrac{P_{fa}^k}{P_{d}^k}\log\dfrac{\pi_{1,0}^k}{\pi_{1,1}^k}>\dfrac{\pi_{1,0}^k}{\pi_{1,1}^k}+\dfrac{P_{fa}^k}{P_{d}^k}\log\dfrac{1-\pi_{1,0}^k}{1-\pi_{1,1}^k}.\label{log}
\end{eqnarray}
Since $P_d^k>P_{fa}^k$ and $\beta_{\bar{x},x}^k<0.5$, $\pi_{1,1}^k>\pi_{1,0}^k$. It can also be proved that $\dfrac{P_{d}^k}{P_{fa}^k}\dfrac{\pi_{1,0}^k}{\pi_{1,1}^k}>1$. Hence, we have

\begin{eqnarray}
&&
1+(\pi_{1,0}^k-\pi_{1,1}^k)<\dfrac{P_{d}^k}{P_{fa}^k}\dfrac{\pi_{1,0}^k}{\pi_{1,1}^k} \nonumber\\ \nonumber
&\Leftrightarrow&
(\pi_{1,0}^k-\pi_{1,1}^k)[1+(\pi_{1,0}^k-\pi_{1,1}^k)]>\dfrac{P_{d}^k}{P_{fa}^k}\dfrac{\pi_{1,0}^k}{\pi_{1,1}^k} (\pi_{1,0}^k-\pi_{1,1}^k)\nonumber\\
&\Leftrightarrow&
(\pi_{1,0}^k-\pi_{1,1}^k)\left[\dfrac{1+(\pi_{1,0}^k-\pi_{1,1}^k)}{\pi_{1,0}^k(1-\pi_{1,1}^k)}\right]>\dfrac{P_{d}^k}{P_{fa}^k}\dfrac{\pi_{1,0}^k}{\pi_{1,1}^k}\left[\dfrac{\pi_{1,0}^k-\pi_{1,1}^k}{\pi_{1,0}^k(1-\pi_{1,1}^k)}\right]\nonumber\\
&\Leftrightarrow&
(\pi_{1,0}^k-\pi_{1,1}^k)\left[\dfrac{1}{1-\pi_{1,1}^k}+\dfrac{1}{\pi_{1,0}^k}\right]>\dfrac{P_{d}^k}{P_{fa}^k}\left[\dfrac{\pi_{1,0}^k-\pi_{1,0}^k\pi_{1,1}^k+\pi_{1,0}^k\pi_{1,1}^k-\pi_{1,1}^k}{\pi_{1,1}^k(1-\pi_{1,1}^k)}\right]\nonumber\\
&\Leftrightarrow&
\left[\dfrac{1-\pi_{1,1}^k-(1-\pi_{1,0}^k)}{1-\pi_{1,1}^k}+\dfrac{(\pi_{1,0}^k-\pi_{1,1}^k)}{\pi_{1,0}^k}\right]>\dfrac{P_{d}^k}{P_{fa}^k}\left[\dfrac{\pi_{1,0}^k}{\pi_{1,1}^k}-\dfrac{1-\pi_{1,0}^k}{1-\pi_{1,1}^k}\right]\nonumber\\
&\Leftrightarrow&
\dfrac{1-\pi_{1,0}^k}{1-\pi_{1,1}^k}+\dfrac{P_{fa}^k}{P_{d}^k}
\left(1-\dfrac{\pi_{1,1}^k}{\pi_{1,0}^k}\right)
>\dfrac{\pi_{1,0}^k}{\pi_{1,1}^k}+\dfrac{P_{fa}^k}{P_{d}^k}
\left(\dfrac{1-\pi_{1,0}^k}{1-\pi_{1,1}^k}-1\right).\label{eq-1}
\end{eqnarray}

To prove that \eqref{log} is true, we apply the logarithm inequality $(x-1)\geq\log x \geq \dfrac{x-1}{x}$, for $x>0$ to \eqref{eq-1}. First, let us assume that $x=\dfrac{\pi_{1,0}^k}{\pi_{1,1}^k}$. Now using the logarithm inequality we can show that $\log\dfrac{\pi_{1,0}^k}{\pi_{1,1}^k}\geq 1-\dfrac{\pi_{1,1}^k}{\pi_{1,0}^k}$. Next, let us assume that $x=\dfrac{1-\pi_{1,0}^k}{1-\pi_{1,1}^k}$. Now using the logarithm inequality it can be shown that $\left[\dfrac{1-\pi_{1,0}^k}{1-\pi_{1,1}^k}-1\right] \geq \log \dfrac{1-\pi_{1,0}^k}{1-\pi_{1,1}^k}$. Using these results and \eqref{eq-1}, one can prove that condition \eqref{log} is true.

Similarly, we can show that any non zero deviation $\epsilon\in(0,p]$ in flipping probabilities $(P_{1,0}^{k}, P_{0,1}^{k})=(p-\epsilon,p)$ will result in an increase in $D_k$, i.e.,
$\dfrac{dD_k}{d\epsilon}>0$,
 or
\begin{eqnarray}
\dfrac{\pi_{1,0}^k}{\pi_{1,1}^k}+\dfrac{1-P_{fa}^k}{1-P_{d}^k}\log\dfrac{1-\pi_{1,0}^k}{1-\pi_{1,1}^k}>\dfrac{1-\pi_{1,0}^k}{1-\pi_{1,1}^k}+\dfrac{1-P_{fa}^k}{1-P_{d}^k}\log\dfrac{\pi_{1,0}^k}{\pi_{1,1}^k}.\label{log1}
\end{eqnarray} 
Since $P_d^k>P_{fa}^k$ and $\beta_{\bar{x},x}^k<0.5$, $\pi_{1,1}^k>\pi_{1,0}^k$. It can also be proved that $\dfrac{1-\pi_{1,0}^k}{1-\pi_{1,1}^k}<\dfrac{1-P_{fa}^k}{1-P_{d}^k}$. Hence, we have

\begin{small}
\begin{eqnarray}
&&
\dfrac{1-\pi_{1,0}^k}{1-\pi_{1,1}^k}<\dfrac{1-P_{fa}^k}{1-P_{d}^k}\left[1-(\pi_{1,0}^k-\pi_{1,1}^k)\right]\\
&\Leftrightarrow&
\dfrac{1-\pi_{1,0}^k}{\pi_{1,1}^k(1-\pi_{1,1}^k)}<\dfrac{1-P_{fa}^k}{1-P_{d}^k}\left[\dfrac{1-(\pi_{1,0}^k-\pi_{1,1}^k)}{\pi_{1,1}^k}\right]\nonumber\\
&\Leftrightarrow&
\dfrac{1}{\pi_{1,1}^k(1-\pi_{1,1}^k)}<\dfrac{1-P_{fa}^k}{1-P_{d}^k}\left[\dfrac{1-(\pi_{1,0}^k-\pi_{1,1}^k)}{\pi_{1,1}^k(1-\pi_{1,0}^k)}\right]\nonumber\\
&\Leftrightarrow&
\dfrac{1}{\pi_{1,0}^k-\pi_{1,1}^k}\left[\dfrac{\pi_{1,0}^k-\pi_{1,0}^k\pi_{1,1}^k+\pi_{1,0}^k\pi_{1,1}^k-\pi_{1,1}^k}{\pi_{1,1}^k(1-\pi_{1,1}^k)}\right]<\dfrac{1-P_{fa}^k}{1-P_{d}^k}\left[\dfrac{1-(\pi_{1,0}^k-\pi_{1,1}^k)}{\pi_{1,1}^k(1-\pi_{1,0}^k)}\right]\nonumber\\
&\Leftrightarrow&
\dfrac{1}{\pi_{1,0}^k-\pi_{1,1}^k}\left[ \dfrac{\pi_{1,0}^k}{\pi_{1,1}^k}-\dfrac{1-\pi_{1,0}^k}{1-\pi_{1,1}^k}\right]<\dfrac{1-P_{fa}^k}{1-P_{d}^k}\left[\dfrac{1}{\pi_{1,1}^k}+\dfrac{1}{1-\pi_{1,0}^k}\right]\label{eq-7}\\
&\Leftrightarrow&
\dfrac{\pi_{1,0}^k}{\pi_{1,1}^k}-\dfrac{1-\pi_{1,0}^k}{1-\pi_{1,1}^k}>\dfrac{1-P_{fa}^k}{1-P_{d}^k}\left[\dfrac{\pi_{1,0}^k-\pi_{1,1}^k}{\pi_{1,1}^k}+\dfrac{\pi_{1,0}^k-\pi_{1,1}^k}{1-\pi_{1,0}^k}\right]\\
&\Leftrightarrow&
\dfrac{\pi_{1,0}^k}{\pi_{1,1}^k}-\dfrac{1-\pi_{1,0}^k}{1-\pi_{1,1}^k}>\dfrac{1-P_{fa}^k}{1-P_{d}^k}\left[\dfrac{\pi_{1,0}^k-\pi_{1,1}^k}{\pi_{1,1}^k}+\dfrac{1-\pi_{1,1}^k-(1-\pi_{1,0}^k)}{1-\pi_{1,0}^k}\right]\nonumber\\
&\Leftrightarrow&
\dfrac{\pi_{1,0}^k}{\pi_{1,1}^k}+\dfrac{1-P_{fa}^k}{1-P_{d}^k}
\left[1-\dfrac{1-\pi_{1,1}^k}{1-\pi_{1,0}^k}\right]
>\dfrac{1-\pi_{1,0}^k}{1-\pi_{1,1}^k}
+\dfrac{1-P_{fa}^k}{1-P_{d}^k}\left[\dfrac{\pi_{1,0}^k}{\pi_{1,1}^k}-1\right].\label{eq-11}
\end{eqnarray}
\end{small} 
To prove that \eqref{log1} is true, we apply the logarithm inequality $(x-1)\geq\log x \geq \dfrac{x-1}{x}$, for $x>0$ to \eqref{eq-11}. First, let us assume that $x=\dfrac{1-\pi_{1,0}^k}{1-\pi_{1,1}^k}$. Now using the logarithm inequality we can show that $\log\dfrac{1-\pi_{1,0}^k}{1-\pi_{1,1}^k}\geq 1-\dfrac{1-\pi_{1,1}^k}{1-\pi_{1,0}^k}$. Next, let us assume that $x=\dfrac{\pi_{1,0}^k}{\pi_{1,1}^k}$. Now using the logarithm inequality it can be shown that $\left[\dfrac{\pi_{1,0}^k}{\pi_{1,1}^k}-1\right] \geq \log \dfrac{\pi_{1,0}^k}{\pi_{1,1}^k}$. Using these results and \eqref{eq-11}, one can prove that condition \eqref{log1} is true.

\bibliographystyle{IEEEtran}
\bibliography{Conf,Book,Journal}

\end{document}